\newtheorem{theorem}{Theorem}
\newtheorem{corollary}{Corollary}
\newtheorem{proposition}{Proposition}
\newtheorem{remark}{Remark}
\newenvironment{proof}[1][Proof]{\noindent\textbf{#1.} }{\ \rule{0.5em}{0.5em}}
\newcommand{\sd}{\footnotesize{\Sigma\Delta}}
\newcommand{\dpcm}{\text{DPCM}}
\newcommand{\bx}{\mathbf{x}}
\newcommand{\ZZ}{\mathbb{Z}}
\newcommand{\Unif}{\mathop{\mathrm{Uniform}}}
\newcommand{\argmin}{\operatornamewithlimits{argmin}}
\newcommand{\m}{\mathcal}
\begin{document}

\title{Performance Analysis and Optimal Filter Design for Sigma-Delta Modulation via Duality with DPCM}

\author{Or~Ordentlich and
        Uri~Erez,~\IEEEmembership{Member,~IEEE}
\thanks{The work of O. Ordentlich was supported by the Adams Fellowship Program of the Israel Academy of Sciences and Humanities, a fellowship from The Yitzhak and Chaya Weinstein Research Institute for Signal Processing at Tel Aviv University and the Feder Family Award. The work of U. Erez was supported by by the ISF under Grant 1557/13.}
\thanks{O. Ordentlich and U. Erez are with Tel Aviv University, Tel Aviv, Israel (email: ordent,uri@eng.tau.ac.il).
}}

\maketitle

\begin{abstract}
Sampling above the Nyquist rate is at the heart of sigma-delta modulation, where the increase in sampling rate is translated to a reduction in the overall (mean-squared-error) reconstruction distortion. This is attained by using a feedback filter at the encoder, in conjunction with a low-pass filter at the decoder. The goal of this work is to characterize the optimal trade-off between the per-sample quantization rate and the resulting mean-squared-error distortion, under various restrictions on the feedback filter. To this end, we establish a duality relation between the performance of sigma-delta modulation, and that of differential pulse-code modulation when applied to (discrete-time) band-limited inputs. As the optimal trade-off for the latter scheme is fully understood, the full characterization for sigma-delta modulation, as well as the optimal feedback filters, immediately follow.
\end{abstract}

\section{Introduction}
\label{sec:Into}

Analog-to-digital (A/D) and digital-to-analog (D/A) converters are essential in modern electronics. In many cases, it is the quality of these converters that constitutes the main bottleneck in the system, and consequently, dictates its entire performance. On the other hand, as digital circuits are now considered relatively cheap to implement, the interface between the analog and digital domains is often one of the most expensive components in the system. Developing A/D and D/A components that are on the one hand relatively simple, and on the other hand introduce little distortion, is therefore of interest.

Often, the same A/D (or D/A) component is applied to a variety of signals with distinct characterizations. For this reason, it is desirable to design the data converter to be robust to the characteristics of the input signal. One assumption that cannot be avoided is the bandwidth of the signal to be converted, which dictates the minimal sampling rate, according to Nyquist's theorem. Beyond bandwidth, however, one would like to assume as little as possible about the input signal. A reasonable model for the input signal is therefore a \emph{stochastic} one, where the input signal is assumed to be a stationary Gaussian process with a given variance and an arbitrary \emph{unknown} power spectral density (PSD) within the assumed bandwidth, and zero otherwise. In this paper, we adopt this \emph{compound} model which is rich enough to include a wide variety of processes. The robustness requirement from the A/D (or D/A) converter translates to requiring that it induces a small average distortion simultaneously for all processes within our compound model.

Sigma-delta modulation is a widely used technique for A/D as well as D/A conversion. The main advantage offered by this type of modulation is the ability to trade-off the sampling rate and the number of bits per sample required for achieving a target mean-squared error (MSE) distortion. The input to the sigma-delta modulator is a signal sampled at $L$ times the Nyquist rate ($L>1$). This over-sampled signal is then quantized using an $R$-bit quantizer. In much of the literature about sigma-delta modulation, no stochastic model is assumed for the input signal. However, when such a model is assumed, the benefit of over-sampling can be easily understood from basic rate-distortion theoretic principles: the (per-sample) rate required to achieve distortion $D$ for the over-sampled signal is $L$ times smaller than the rate required to achieve the same distortion for the signal obtained by sampling at the Nyquist rate. Thus, in principle, increasing the sampling rate should allow one to use quantizers with lower resolution, which is desirable in many applications.

However, the rate-distortion theoretical property that guarantees a constant product of the number of bits per sample needed to achieve distortion $D$, and the over-sampling ratio $L$, is only valid when a very long block of samples is vector-quantized. In A/D and D/A conversion, vector-quantization in high dimensions is a prohibitively complex operation, and quantization is invariably done via scalar uniform quantizers. Scalar quantizers alone cannot translate the increase of sampling rate to a significant reduction in the necessary resolution, but fortunately this problem can be circumvented with the aid of appropriate signal processing.

In sigma-delta based converters, the quantization noise is shaped using a causal shaping filter embedded within a feedback loop, see Figure~\ref{fig:SigDelTest}. The filter coefficients are chosen in a manner that ensures that most of the energy of the shaped quantization noise lies outside the frequency band occupied by the over-sampled signal. At the decoder, the quantized signal is low-pass filtered, cancelling out the high-frequencies of the quantization noise process without effecting the signal, such that the decoder's output is composed of the original signal corrupted by a low-pass noise process. 

Another technique for compressing sources with memory, which explicitly models the source as a stochastic process, is differential pulse-code modulation (DPCM). In DPCM, a prediction filter is applied to the quantized signal. The output of this filter is then subtracted from the source and the result is fed to the quantizer, see Figure~\ref{fig:DPCMtest}. At the decoder, the quantized signal is simply passed through the inverse of the prediction filter. The well-known ``DPCM error identity''~\cite{jayantnoll} states that the output of the decoder is equal to the source plus the quantization error, just like in simple non-predictive quantization. The benefit of using DPCM, however, is that the signal fed to the quantizer is the error in predicting the source from its \emph{quantized} past, rather than the source itself. If the coefficients of the prediction filter are chosen appropriately, the variance of this error should be smaller than the variance of the original source, which translates to a reduction in the number of bits required from the quantizer for achieving a certain distortion.

The performance of DPCM under the assumption of high-resolution quantization is well understood since as early as the mid 60's~\cite{McDonald66,jayantnoll,gn98}. Under this assumption, the prediction filter should be chosen as the optimal linear minimum mean-squared-error (MMSE) prediction filter of the source process from its past~\cite{jayantnoll}, and the effect of the filtered quantization noise can be neglected in the prediction process. While in most cases where DPCM is traditionally used, the high resolution assumption is well justified, it totally breaks down for the class of band-limited processes, which includes the input signals to sigma-delta modulators. Indeed, the prediction error of such a process from its infinite past has zero-variance, rendering the DPCM high-resolution rate-distortion formulas completely useless.

\subsection{Connection to Previous Work}

The connection between DPCM and sigma-delta modulation, as two instances of predictive coding, was known from the outset. Indeed, both paradigms emerged from two Bell-Labs patents authored by CC Cutler~\cite{cutler1952,cutler1954} in 1952 and 1954.

In fact, by adding appropriate pre- and post-filters to the sigma-delta modulator, as depicted in Figure~\ref{fig:PrePotsSigDelTest}, the input to the quantizer, as well as the final reconstruction of the signal, become identical to those in the DPCM architecture~\cite[Section II]{th78},~\cite[Chapter 3.2.4]{derpichPhD}. For this reason, it has become folklore that the two architectures are equivalent. When a sigma-delta modulator is used for compression of digital discrete-time signals, the pre-filtering can be performed digitally and the additional complexity of the architecture depicted in Figure~\ref{fig:PrePotsSigDelTest}, w.r.t. that in depicted in Figure~\ref{fig:SigDelTest}, may be acceptable. This is however \emph{not} the case for data converters, as the input to the latter is analog and pre-filtering must be done in continuous-time, which is more challenging. The motivation for this work is understanding the performance limits of A/D and D/A conversion based on the sigma-delta architecture, and therefore pre-filtering is precluded. Thus, the architecture is confined to that depicted in Figure~\ref{fig:SigDelTest}.

Another important aspect of our interest in sigma-delta modulators as a mean of data-conversion rather then data-compression, is that it dictates that the assumptions one can make on the statistics of the input signal must be minimal. Consequently, we consider a \emph{compound} class of sources that consists of all stationary Gaussian processes with variance $\sigma^2_X$ whose PSD is limited to some predefined frequency band. In addition, since data converters often operate at very high rates, it makes sense to impose various constraints on the sigma-delta feedback filter $C(Z)$, such as confining it to be a finite impulse response (FIR) filter with a limited number of taps. For a given desired MSE distortion level, our goal is to find the constrained sigma-delta feedback filter $C(Z)$ that minimizes the quantization rate w.r.t. all sources in the compound class, and to characterize the attained rate. This goal is different than the one pursued in~\cite{dsqg08}, where the optimal \emph{unconstrained} filters w.r.t. a known PSD were found.

The problem of finding the optimal $N$-tap FIR sigma-delta feedback filter $C(Z)$ for a compound family of sources similar to ours, was considered in~\cite{th78}. The optimal filter was claimed in~\cite{th78} to be the $N$th order MMSE prediction filter $C(Z)=(1-Z^{-1})^N$ of a bandpass stationary process from its past, and for a fixed target MSE distortion the required quantization rate was found to decrease linearly with $N$. Such as statement is obviously inaccurate, as it violates Shannon's rate-distortion theorem. The major drawback of~\cite{th78} is that it (implicitly) makes the high-resolution assumption that the variance of the quantizer's input is solely dictated by the target signal $\{X_n\}$, whereas the contribution of the quantization noise to this variance can be neglected. As discussed above, for over-sampled processes this assumption may not be valid even when the quantizer's resolution is very high. In particular, using the filter $C(Z)=(1-Z^{-1})^N$ from~\cite{th78}, the energy of the quantization noise within the frequency band occupied by the signal indeed decreases exponentially with $N$. However, the noise's energy outside this band increases rapidly with $N$, and for any quantization resolution it will become much greater than $\sigma^2_X$ for $N$ large enough, making the high-resolution assumption inapplicable. In this case, the dynamic range of the quantizer will be exceeded and overload errors would frequently occur.

It therefore follows that in the analysis of sigma-delta modulators one should not make high-resolution assumptions, but rather must take into account the effect of the filtered quantization noise on the variance of the quantizer's input. Fortunately, in the analysis of DPCM modulators the high-resolution assumption has been overcome in~\cite{zke08}. It was shown that for any distortion level and any stationary Gaussian source, the DPCM architecture induces a rate-distortion optimal test channel, provided that the prediction filter is chosen as the optimal filter for predicting the source from its \emph{quantized past}, and in addition water-filling pre- and post-filters are applied. The analysis of~\cite{zke08}, which takes into account the effect of the quantization noise, can therefore be used to obtain the optimal feedback filter and its corresponding performance for a DPCM system applied to an over-sampled stationary Gaussian source. In this paper, we leverage the results from~\cite{zke08} to the analysis of sigma-delta modulators, by establishing an appropriate duality between the two architectures.

\subsection{Contributions}
\label{subsec:contributions}

Let $\m{S}$ be the compound class of all discrete-time stationary Gaussian sources with variance $\sigma^2_X$ and PSD that is zero for all $\omega\notin [-\pi/L,\pi/L]$, $L\geq 1$. Note that this class corresponds to uniformly sampling a compound class of continuous-time stationary Gaussian processes with variance $\sigma^2_X$ and PSD that is zero for all $|f|>f_{\text{max}}$, at a sampling rate of $2Lf_{\text{max}}$ samples/per second. Let $\{X^{\dpcm}_n\}$ be a discrete-time stationary Gaussian process with PSD
\begin{align}
S^{\dpcm}_X(\omega)=\begin{cases}
L\sigma_X^2 & \text{for } |\omega|\leq \pi/L \\
0 & \text{for } \pi/L<|\omega|<\pi
\end{cases},\label{Xpsd}
\end{align}
and note that $\{X^{\dpcm}_n\}\in\m{S}$.

Our main result, derived in Section~\ref{sec:mainresult}, is that for any process $\{X_n^{\sd}\}$ from the compound class $\m{S}$, the test channel induced by the sigma-delta modulator (Figure~\ref{fig:SigDelTest}) achieves exactly the same rate-distortion function as that of the DPCM test channel (Figure~\ref{fig:DPCMtest}) with input $\{X^{\dpcm}_n\}$. More specifically, for such processes, for any choice of $\sigma^2_{\text{DPCM}}$ and prediction filter $C(Z)$ in the test channel of Figure~\ref{fig:DPCMtest}, the same choice of $C(Z)$ together with the choice
\begin{align}
\sigma^2_{\Sigma\Delta}=\frac{\sigma^2_{\text{DPCM}}}{L \cdot \frac{1}{2\pi}\int_{-\pi/L}^{\pi/L}|1-C(\omega)|^2 d\omega}\label{sigdelnoise}
\end{align}
in Figure~\ref{fig:SigDelTest}, yields the same compression rate and the same distortion.

While this result is simple to derive, it has a very pleasing consequence: the problem of optimizing the filter $C(Z)$ in sigma-delta modulation w.r.t. any signal in $\m{S}$, under any set of constraints, can be cast as an equivalent problem of optimizing the DPCM prediction filter w.r.t. input $\{X^{\dpcm}_n\}$ under the same set of constraints. Furthermore, in Section~\ref{subsec:fwmse}, we formalize a similar duality between DPCM and sigma-delta modulation for a frequency-weighted-mean-squared-error distortion measure. In this case $S^{\dpcm}_X(\omega)$ is replaced with a PSD that depends on the distortion's weight function.

In principle, recasting the sigma-delta optimization problem as an MMSE prediction problem may be derived directly from the formulas characterizing its performance, as given in Proposition~\ref{prop:SigDelRD}. Nevertheless, establishing the equivalence between sigma-delta modulation and DPCM, in the specific form described above, is insightful as it allows to borrow known results from the literature about the latter.

Having recast the filter optimization problem for sigma-delta as that of optimal linear prediction, we can readily obtain the solution under constraints for which an explicit solution was lacking in the literature, or was cumbersome to derive.

One may question the relevance of the test channel of Figure~\ref{fig:SigDelTest} and its information-theoretic analysis to the practical, resource limited, problem of A/D and D/A conversion via sigma-delta modulators. 
To that end, in Section~\ref{sec:scalarquant} we replace the AWGN channel from Figure~\ref{fig:SigDelTest} with a simple scalar uniform (dithered) quantizer of finite support, which is suitable for implementation within A/D and D/A converters. As long as overload does not occur, the effect of applying the scalar quantizer is equivalent to that of an additive noise channel. We show that the rate-distortion trade-off derived for sigma-delta modulation in Section~\ref{sec:mainresult} remains valid with high probability, with a constant additive excess-rate penalty for using scalar quantization. The purpose of this excess-rate is to ensure that an overload event, which jeopardizes the stability of the system, occurs with low probability. The stochastic model we assume for the input process allows us to tackle the issue of stability in a systematic and rigourous manner, and the trade-off between the excess-rate and the overload probability is analytically determined.

Clearly, a sigma-delta modulator can only perform well if overload errors are rather rare. Our stability analysis in Section~\ref{sec:scalarquant} is based on avoiding overload events w.h.p., and does not aim to consider the effect of such events on the distortion once they occur. In general, the overload probability of the scheme described in Section~\ref{sec:scalarquant} decreases double exponentially with the excess-rate of the quantizer w.r.t. the mutual information. Thus, taking an excess rate of $1-2$ bits will usually yield a sufficiently low overload probability. However, sigma-delta quantizers are often employed with a one-bit quantizer. In this case, the overload error probability cannot be very low. Consequently, the designer would need to guarantee that the effect of overload errors is local in time, and does not drive the system out of stability. There are various restrictions one can place on $C(Z)$ in pursuit of the latter goal. The issue of maintaining stability when overload errors are unavoidable is outside the scope of this paper. Nevertheless, we stress that our main result is of great relevance to this setting, as it shows that the filter $C(Z)$ should be chosen as the optimal MMSE prediction filter of $\{X^{\dpcm}_n\}$ from its noisy past under the stability ensuring restrictions.


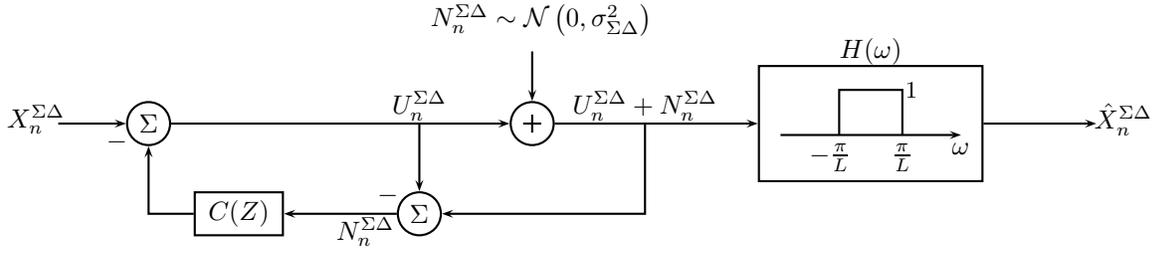
\begin{figure*}[]
\begin{center}
\psset{unit=0.6mm}
\begin{pspicture}(0,0)(250,55)
\rput(0,30){
\rput(0,1){$X^{\sd}_n$}\psline{->}(5,0)(20,0)\pscircle(25,0){5}\rput(25,0){$\Sigma$}\rput(18,-4){$-$}\psline{->}(30,0)(105,0)\rput(85,4){$U^{\sd}_n$}
\pscircle(110,0){5}\psline(108,0)(112,0)\psline(110,-2)(110,2)\psline{->}(110,16)(110,5)
\rput(112,23){$N^{\sd}_n\sim\mathcal{N}\left(0,\sigma^2_{\Sigma\Delta}\right)$}
\psline{->}(115,0)(160,0)\rput(135,4){$U^{\sd}_n+N^{\sd}_n$}\psframe(160,-13)(210,13)\rput(185,16){$H(\omega)$}
\psline{->}(135,0)(135,-20)(90,-20)\pscircle(85,-20){5}\rput(85,-20){$\Sigma$}\rput(78,-16){$-$}\psline{->}(85,0)(85,-15)
\psline{->}(80,-20)(55,-20)\rput(73,-24){$N^{\sd}_n$}
\psframe(35,-25)(55,-15)\rput(45,-20){$C(Z)$}\psline{->}(35,-20)(25,-20)(25,-5)
\rput(160,2.5){
\psline{->}(5,-5)(45,-5)\rput(45,-8){$\omega$}
\psline(18,-5)(18,5)(32,5)(32,-5)\rput(34,5){\small $1$}
\rput(16,-10){$-\frac{\pi}{L}$}\rput(32,-10){$\frac{\pi}{L}$}
}
\psline{->}(210,0)(235,0)\rput(241,1){$\hat{X}^{\sd}_n$}
}
\end{pspicture}
\end{center}
\caption{The test channel corresponding to the sigma-delta modulation architecture, with the sigma-delta quantizer replaced by an AWGN channel. The input is assumed to be over-sampled at $L$ times the Nyquist rate.}
\label{fig:SigDelTest}
\end{figure*} 
\begin{figure*}[]
\begin{center}
\psset{unit=0.6mm}
\begin{pspicture}(0,0)(250,65)
\rput(0,40){
\rput(0,1){$X^{\dpcm}_n$}\psline{->}(5,0)(20,0)\pscircle(25,0){5}\rput(25,0){$\Sigma$}\rput(18,-4){$-$}\psline{->}(30,0)(55,0)\rput(45,4){$U^{\dpcm}_n$}
\pscircle(60,0){5}\psline(58,0)(62,0)\psline(60,-2)(60,2)\psline{->}(60,16)(60,5)\rput(62,20){$N^{\dpcm}_n\sim\mathcal{N}\left(0,\sigma^2_{\text{DPCM}}\right)$}
\psline{->}(65,0)(100,0)\rput(85,4){$U^{\dpcm}_n+N^{\dpcm}_n$}\pscircle(105,0){5}\rput(105,0){$\Sigma$}\rput(98.5,-4){$+$}
\psline{->}(110,0)(160,0)\psframe(160,-13)(210,13)\rput(185,16){$H(\omega)$}\rput(135,4){$V^{\dpcm}_n$}
\rput(160,2.5){
\psline{->}(5,-5)(45,-5)\rput(45,-8){$\omega$}
\psline(18,-5)(18,5)(32,5)(32,-5)\rput(34,5){\small $1$}
\rput(16,-10){$-\frac{\pi}{L}$}\rput(32,-10){$\frac{\pi}{L}$}
}
\psline{->}(135,0)(135,-35)(105,-35)\psframe(85,-40)(105,-30)\rput(95,-35){$C(Z)$}\psline(85,-35)(65,-35)(65,-18)
\psline{->}(210,0)(235,0)\rput(243,1){$\hat{X}^{\dpcm}_n$}
\psline{<->}(25,-5)(25,-18)(105,-18)(105,-5)
}
\end{pspicture}
\end{center}
\caption{The test channel corresponding to the DPCM architecture, with the DPCM quantizer replaced by an AWGN channel. The input is assumed to be over-sampled at $L$ times the Nyquist rate.}
\label{fig:DPCMtest}
\end{figure*}
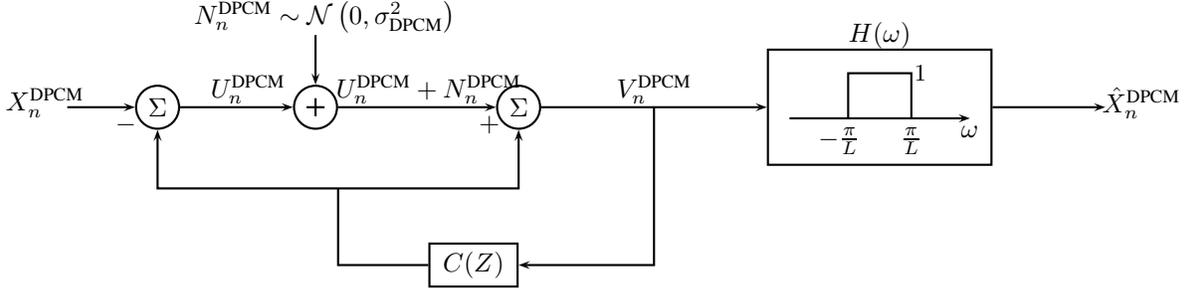 
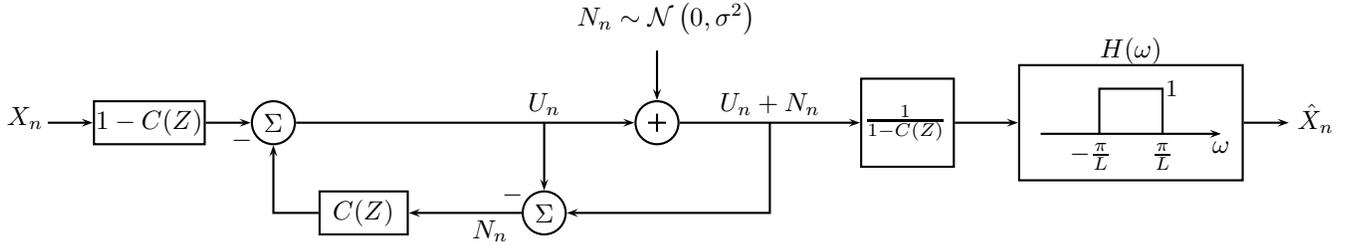
\begin{figure*}[]
\begin{center}
\psset{unit=0.6mm}
\begin{pspicture}(-30,0)(250,55)
\rput(-30,30){
\rput(0,1){$X_n$}\psline{->}(5,0)(15,0)\psframe(15,-5)(40,5)\rput(27.5,0){$1-C(Z)$}
}

\rput(0,30){
\psline{->}(10,0)(20,0)\pscircle(25,0){5}\rput(25,0){$\Sigma$}\rput(18,-4){$-$}\psline{->}(30,0)(105,0)\rput(85,4){$U_n$}
\pscircle(110,0){5}\psline(108,0)(112,0)\psline(110,-2)(110,2)\psline{->}(110,16)(110,5)
\rput(112,23){$N_n\sim\mathcal{N}\left(0,\sigma^2\right)$}
\psline{->}(115,0)(155,0)\rput(135,4){$U_n+N_n$}
\psframe(155,-10)(176,10)\rput(165,0){$\frac{1}{1-C(Z)}$}\psline{->}(176,0)(190,0)
\psframe(190,-13)(240,13)\rput(215,16){$H(\omega)$}
\psline{->}(135,0)(135,-20)(90,-20)\pscircle(85,-20){5}\rput(85,-20){$\Sigma$}\rput(78,-16){$-$}\psline{->}(85,0)(85,-15)
\psline{->}(80,-20)(55,-20)\rput(73,-24){$N_n$}
\psframe(35,-25)(55,-15)\rput(45,-20){$C(Z)$}\psline{->}(35,-20)(25,-20)(25,-5)

\rput(190,2.5){
\psline{->}(5,-5)(45,-5)\rput(45,-8){$\omega$}
\psline(18,-5)(18,5)(32,5)(32,-5)\rput(34,5){\small $1$}
\rput(16,-10){$-\frac{\pi}{L}$}\rput(32,-10){$\frac{\pi}{L}$}
}
\psline{->}(240,0)(250,0)\rput(256,1){$\hat{X}_n$}
}
\end{pspicture}
\end{center}
\caption{A test channel corresponding to the sigma-delta modulation with pre-filter $1-C(Z)$ and post filter $\tfrac{1}{1-C(Z)}$. This test-channel is equivalent to that from Figure~\ref{fig:DPCMtest}. However, the pre-filter makes this architecture unattractive for data converters.}
\label{fig:PrePotsSigDelTest}
\end{figure*} 

\section{Main Result}
\label{sec:mainresult}

We begin by introducing some basic notation that will be used in the sequel. For a discrete signal $\{c_n\}$, the $Z$-transform is defined as
\begin{align}
C(Z)\triangleq\sum_{n=-\infty}^{\infty}c_n Z^{-n},\nonumber
\end{align}
and the Fourier transform as
\begin{align}
C(\omega)\triangleq C(Z)|_{Z=e^{j\omega}}=\sum_{n=-\infty}^{\infty}c_n e^{-j\omega n}.\nonumber
\end{align}
For a discrete (real) stationary process $\{X_n\}$ with zero-mean and autocorrelation function $R_{X}[k]\triangleq\mathbb{E}(X_{n+k} X_n)$ we define the power-spectral density (PSD) as the Fourier transform of the autocorrelation function
\begin{align}
S_{X}(\omega)\triangleq \sum_{k=-\infty}^{\infty}R_X[k] e^{-j\omega k}.\nonumber
\end{align}
The PSD of a continuous stationary process is defined in an analogous manner.

Assume $X^{\sd}(t)$ is a continuous stationary band-limited Gaussian process with zero mean and variance $\sigma^2_X$, whose PSD is zero for all frequencies $|f|> f_{\text{max}}$, but otherwise unknown. The Nyquist sampling rate for this process is $2 f_{\text{max}}$ samples per second. Since our focus here is on quantization of over-sampled signals, we assume that $X^{\sd}(t)$ is sampled uniformly with rate of $2L f_{\text{max}}$  samples per second for some $L>1$. The obtained sampled process $\{X^{\sd}_n\}$ is therefore a discrete stationary Gaussian process with zero mean and variance $\sigma_X^2$ whose PSD is zero for all $\omega\notin [-\pi/L,\pi/L]$, but otherwise unknown. Our goal is to characterize the rate-distortion trade-off obtained by a sigma-delta modulator, modeled as the test channel from Figure~\ref{fig:SigDelTest}, whose input is $\{X^{\sd}_n\}$. To that end, we establish an equivalence between the performance obtained by this test channel for any stationary band-limited Gaussian process with variance $\sigma^2_X$ and the performance obtained by the test channel from Figure~\ref{fig:DPCMtest}, which models a DPCM compression system, for a stationary \emph{flat} band-limited Gaussian process with variance $\sigma^2_X$. The performance of the latter is now well understood~\cite{zke08}, and, as we shall show, can be translated to a simple characterization of the performance of sigma-delta modulation.

First, we recall the derivation of the distortions attained by the test channels from Figure~\ref{fig:SigDelTest} and Figure~\ref{fig:DPCMtest}, and the scalar mutual information $I(U^{\sd}_n;U^{\sd}_n+N^{\sd}_n)$ and $I(U^{\dpcm}_n;U^{\dpcm}_n+N^{\dpcm}_n)$ between the input and output of the additive white Gaussian noise (AWGN) channels embedded within the two test channels.

The test channels in Figure~\ref{fig:SigDelTest} and Figure~\ref{fig:DPCMtest} do not immediately induce an output distribution from which a random quantization codebook with rate $I(U_n;U_n+N_n)$ and MSE distortion $D$ can be drawn. The reason for this is the sequential nature of the compression, which seems to conflict with the need of using high-dimensional quantizers, as required for attaining a quantization error distributed as $N_n$ with compression rate $I(U_n;U_n+N_n)$. Fortunately, this difficulty, which is also present in decision--feedback equalization for intersymbol interference channels, can be overcome with the help of an interleaver~\cite{gv05,zke08,oz09}. Thus, the scalar mutual information $I(U_n;U_n+N_n)$ can indeed be interpreted as the compression rate needed to achieve the distortion attained by the test channels in Figure~\ref{fig:SigDelTest} and Figure~\ref{fig:DPCMtest}. We elaborate further about this in subsection~\ref{subsec:vecquantization}. Moreover, in Section~\ref{sec:scalarquant} we show that $I(U_n;U_n+N_n)$ is closely related to the required quantization rate in a sigma-delta modulator that applies a \emph{uniform scalar quantizer} of finite support.

We begin with the test channel in Figure~\ref{fig:SigDelTest}, that corresponds to sigma-delta modulation, with the sigma-delta quantizer replaced by an AWGN channel with zero mean and variance $\sigma^2_{\Sigma\Delta}$. The filter $C(Z)$ is assumed to be strictly causal.

\vspace{1mm}

\begin{proposition}
For any Gaussian stationary process $\{X^{\sd}_n\}$ with variance $\sigma_X^2$ whose PSD is zero for all $\omega\notin [-\pi/L,\pi/L]$, the test channel from Figure~\ref{fig:SigDelTest} achieves MSE distortion
\begin{align}
D=\sigma^2_{\Sigma\Delta}\cdot\frac{1}{2\pi}\int_{-\pi/L}^{\pi/L}|1-C(\omega)|^2 d\omega,\nonumber
\end{align}
and its scalar mutual information satisfies\footnote{All logarithms in this paper are taken to base $2$.}
\begin{align}
I(U^{\sd}_n;U^{\sd}_n&+N^{\sd}_n)\nonumber\\
&=\frac{1}{2}\log\left(1+\frac{1}{2\pi}\int_{-\pi}^{\pi}|C(\omega)|^2 d\omega+\frac{\sigma^2_X}{\sigma^2_{\Sigma\Delta}}\right).\nonumber
\end{align}
\label{prop:SigDelRD}
\end{proposition}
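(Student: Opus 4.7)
The plan is to derive the input-output relations of the test channel in the $Z$-domain and then read off both the distortion and the scalar mutual information from standard Gaussian formulas.

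First, I would trace the signals around the loop in Figure~\ref{fig:SigDelTest}. The inner subtractor reconstructs the additive noise via $(U^{\sd}_n+N^{\sd}_n)-U^{\sd}_n=N^{\sd}_n$, so the signal fed back and subtracted from $X^{\sd}_n$ at the input adder is $C(Z)N^{\sd}_n$. This yields the simple relations $U^{\sd}_n=X^{\sd}_n-C(Z)N^{\sd}_n$ and $U^{\sd}_n+N^{\sd}_n=X^{\sd}_n+(1-C(Z))N^{\sd}_n$. The reconstruction $\hat{X}^{\sd}_n$ is then the ideal brick-wall filter $H(\omega)$ (unit gain on $[-\pi/L,\pi/L]$, zero outside) applied to $U^{\sd}_n+N^{\sd}_n$.

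For the distortion, I would use the fact that $\{X^{\sd}_n\}$ is bandlimited to $[-\pi/L,\pi/L]$, so $H$ leaves it unaltered and the reconstruction error equals the low-pass filtered shaped noise $H(\omega)(1-C(\omega))N^{\sd}_n$. Its PSD is $|H(\omega)|^2|1-C(\omega)|^2\sigma^2_{\Sigma\Delta}$, and since $|H(\omega)|^2$ is the indicator of $[-\pi/L,\pi/L]$, integrating over $[-\pi,\pi]$ collapses to the in-band integral and gives precisely the stated distortion.

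For the mutual information, the key step is to argue that the scalars $U^{\sd}_n$ and $N^{\sd}_n$ are independent. This follows from two ingredients: strict causality of $C(Z)$, which forces $C(Z)N^{\sd}_n$ to depend only on $\{N^{\sd}_{n-k}\}_{k\ge 1}$ and hence (by whiteness) to be independent of $N^{\sd}_n$; and the a priori independence of $\{X^{\sd}_n\}$ from the noise sequence. Given this independence, the standard Gaussian identity yields $I(U^{\sd}_n;U^{\sd}_n+N^{\sd}_n)=\tfrac{1}{2}\log(1+\Var(U^{\sd}_n)/\sigma^2_{\Sigma\Delta})$, while one more use of independence together with Parseval gives $\Var(U^{\sd}_n)=\sigma_X^2+\sigma^2_{\Sigma\Delta}\cdot\tfrac{1}{2\pi}\int_{-\pi}^{\pi}|C(\omega)|^2\,d\omega$. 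Substituting produces the claimed log-formula.

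I do not anticipate a deeper obstacle: once the loop is unrolled and strict causality is invoked, both formulas are essentially one-line computations. The only subtlety worth flagging is the well-posedness of the feedback loop in steady state, i.e., that $C(Z)N^{\sd}_n$ defines a finite-variance stationary process. This requires $C(Z)$ to have finite energy on the unit circle, a condition which is implicit in the statement (without it the claimed mutual-information expression would itself be infinite).
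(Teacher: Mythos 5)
Your proposal is correct and follows essentially the same route as the paper's proof: unroll the loop to get $U^{\sd}_n=X^{\sd}_n-c_n*N^{\sd}_n$, use band-limitedness so that $H$ passes $X^{\sd}_n$ unchanged to obtain the distortion, and invoke independence of $U^{\sd}_n$ and $N^{\sd}_n$ (from strict causality of $C(Z)$ plus independence of source and noise) together with the variance computation $\Var(U^{\sd}_n)=\sigma_X^2+\sigma^2_{\Sigma\Delta}\cdot\tfrac{1}{2\pi}\int_{-\pi}^{\pi}|C(\omega)|^2d\omega$ to get the mutual information. Your remark on well-posedness (finite energy of $C$ on the unit circle) is a minor addition the paper leaves implicit, but it does not change the argument.
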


\vspace{1mm}

\begin{proof}
From Figure~\ref{fig:SigDelTest}, we have that
\begin{align}
U^{\sd}_n&=X^{\sd}_n-c_n*N^{\sd}_n\label{UnSigDel},
\end{align}
and therefore
\begin{align}
U^{\sd}_n&+N^{\sd}_n=X^{\sd}_n+(\delta_n-c_n)*N^{\sd}_n,\nonumber
\end{align}
where $\delta_n$ is the discrete identity filter. Using the fact that $\{X^{\sd}_n\}$ is a low-pass process, passing it through the filter $H(\omega)$ has no effect, and hence
\begin{align}
\hat{X}^{\sd}_n&=h_n*(U^{\sd}_n+N^{\sd}_n)\nonumber\\
&=X^{\sd}_n+h_n*(\delta_n-c_n)*N^{\sd}_n.\nonumber
\end{align}
The MSE distortion attained by the test channel from Figure~\ref{fig:SigDelTest} is therefore
\begin{align}
D=\mathbb{E}(X^{\sd}_n-\hat{X}^{\sd}_n)^2=\sigma^2_{\Sigma\Delta}\cdot\frac{1}{2\pi}\int_{-\pi/L}^{\pi/L}|1-C(\omega)|^2 d\omega.\nonumber
\end{align}
The scalar mutual information between the ``quantizer's'' input $U^{\sd}_n$ and output $U^{\sd}_n+N^{\sd}_n$ is given by
\begin{align}
I(U^{\sd}_n;U^{\sd}_n+N^{\sd}_n)&=h(U^{\sd}_n+N^{\sd}_n)-h(N^{\sd}_n)\label{Nnindep}\\
&=\frac{1}{2}\log\left(1+\frac{\mathbb{E}(U^{\sd}_n)^2}{\sigma^2_{\Sigma\Delta}}\right)\label{Iineqsigdel},
\end{align}
where~\eqref{Nnindep}, as well as~\eqref{Iineqsigdel}, follow from the statistical independence of $N^{\sd}_n$ and $U^{\sd}_n$. Using~\eqref{UnSigDel}, the variance of $U^{\sd}_n$ is
\begin{align}
&\mathbb{E}(U^{\sd}_n)^2=\sigma_X^2+\sigma^2_{\Sigma\Delta}\frac{1}{2\pi}\int_{-\pi}^{\pi}|C(\omega)|^2 d\omega.\label{UnSigDelVar}
\end{align}
Substituting~\eqref{UnSigDelVar} into~\eqref{Iineqsigdel} establishes the second part of the proposition.
\end{proof}

\vspace{1mm}

Next, we analyze the test channel in Figure~\ref{fig:DPCMtest}, that corresponds to DPCM compression with the DPCM quantizer replaced by an AWGN channel with zero mean and variance $\sigma^2_{\text{DPCM}}$. As in the test channel of Figure~\ref{fig:SigDelTest}, the filter $C(Z)$ is strictly causal. The distortion corresponding to this test channel, as well as $I(U^{\dpcm}_n;U^{\dpcm}_n+N^{\dpcm}_n)$, were already found in~\cite[Theorem 1]{zke08} for the special case where $C(Z)$ is the optimal MMSE infinite length prediction filter of $X^{\dpcm}_n$ from all past samples of the process $\{X^{\dpcm}_n+N^{\dpcm}_n\}$. The following straightforward proposition characterizes the rate and distortion for any choice of the causal filter $C(Z)$ and any value of $\sigma^2_{\text{DPCM}}$.

\vspace{1mm}

\begin{proposition}
For a Gaussian stationary process $\{X^{\dpcm}_n\}$ with variance $\sigma_X^2$ and PSD
\begin{align}
S^{\dpcm}_X(\omega)=\begin{cases}
L\sigma_X^2 & \text{for } |\omega|\leq \pi/L \\
0 & \text{for } \pi/L<|\omega|<\pi
\end{cases},\label{Xpsd}
\end{align}
the test channel from Figure~\ref{fig:DPCMtest} achieves MSE distortion
\begin{align}
D=\frac{\sigma^2_{\text{DPCM}}}{L},\nonumber
\end{align}
and its scalar mutual information satisfies
\begin{align}
I(U^{\dpcm}_n;U^{\dpcm}_n&+N^{\dpcm}_n)=\frac{1}{2}\log\bigg(1+\frac{1}{2\pi}\int_{-\pi}^{\pi}|C(\omega)|^2 d\omega\nonumber\\
& \ \ \ \ \ \ \ \   +\frac{L\sigma^2_X}{\sigma^2_{\text{DPCM}}}\frac{1}{2\pi}\int_{-\pi/L}^{\pi/L}|1-C(\omega)|^2 d\omega\bigg).\nonumber
\end{align}
\label{prop:DPCMRD}
\end{proposition}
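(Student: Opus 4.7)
The plan is to mirror the proof of Proposition~\ref{prop:SigDelRD}, with the one additional ingredient being the classical DPCM error identity, which makes the internal signal $V^{\dpcm}_n$ in Figure~\ref{fig:DPCMtest} tractable.

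First, I would apply the DPCM error identity to Figure~\ref{fig:DPCMtest}. Reading off the diagram,
\begin{align}
U^{\dpcm}_n&=X^{\dpcm}_n-c_n*V^{\dpcm}_n,\nonumber\\
V^{\dpcm}_n&=U^{\dpcm}_n+N^{\dpcm}_n+c_n*V^{\dpcm}_n.\nonumber
\end{align}
Substituting the first equation into the second, the terms involving $c_n*V^{\dpcm}_n$ cancel and we obtain $V^{\dpcm}_n=X^{\dpcm}_n+N^{\dpcm}_n$. Since $\{X^{\dpcm}_n\}$ is band-limited to $[-\pi/L,\pi/L]$ where $H(\omega)=1$, filtering $V^{\dpcm}_n$ through $H(\omega)$ leaves $X^{\dpcm}_n$ untouched and yields
\begin{align}
\hat{X}^{\dpcm}_n=X^{\dpcm}_n+h_n*N^{\dpcm}_n.\nonumber
\end{align}
The distortion then follows by Parseval:
\begin{align}
D=\mathbb{E}(\hat{X}^{\dpcm}_n-X^{\dpcm}_n)^2=\sigma^2_{\text{DPCM}}\cdot\frac{1}{2\pi}\int_{-\pi/L}^{\pi/L}d\omega=\frac{\sigma^2_{\text{DPCM}}}{L}.\nonumber
\end{align}

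For the scalar mutual information, I would argue exactly as in Proposition~\ref{prop:SigDelRD}. Since $C(Z)$ is strictly causal, the feedback term $c_n*V^{\dpcm}_n$ depends only on $\{N^{\dpcm}_{n-k}\}_{k\geq 1}$ (through $V^{\dpcm}_{n-k}=X^{\dpcm}_{n-k}+N^{\dpcm}_{n-k}$) and on past values of $X^{\dpcm}$, so $U^{\dpcm}_n$ is independent of the current noise $N^{\dpcm}_n$. Hence
\begin{align}
I(U^{\dpcm}_n;U^{\dpcm}_n+N^{\dpcm}_n)=\frac{1}{2}\log\left(1+\frac{\mathbb{E}(U^{\dpcm}_n)^2}{\sigma^2_{\text{DPCM}}}\right).\nonumber
\end{align}

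It remains to compute $\mathbb{E}(U^{\dpcm}_n)^2$. Using the identity $V^{\dpcm}_n=X^{\dpcm}_n+N^{\dpcm}_n$,
\begin{align}
U^{\dpcm}_n=(\delta_n-c_n)*X^{\dpcm}_n-c_n*N^{\dpcm}_n,\nonumber
\end{align}
and since $\{X^{\dpcm}_n\}$ and $\{N^{\dpcm}_n\}$ are independent, Parseval together with the flat PSD~\eqref{Xpsd} gives
\begin{align}
\mathbb{E}(U^{\dpcm}_n)^2&=L\sigma^2_X\cdot\frac{1}{2\pi}\int_{-\pi/L}^{\pi/L}|1-C(\omega)|^2 d\omega\nonumber\\
&\quad+\sigma^2_{\text{DPCM}}\cdot\frac{1}{2\pi}\int_{-\pi}^{\pi}|C(\omega)|^2 d\omega.\nonumber
\end{align}
Plugging this into the mutual information expression and dividing by $\sigma^2_{\text{DPCM}}$ inside the log produces the claimed formula. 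The proof is essentially routine; the only non-obvious step is recognizing that the DPCM error identity collapses $V^{\dpcm}_n$ to $X^{\dpcm}_n+N^{\dpcm}_n$, which makes both computations parallel to those already carried out for sigma-delta in Proposition~\ref{prop:SigDelRD}.
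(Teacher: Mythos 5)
Your proposal is correct and follows essentially the same route as the paper: derive the DPCM error identity $V^{\dpcm}_n=X^{\dpcm}_n+N^{\dpcm}_n$, obtain the distortion by passing the white noise through the ideal low-pass filter, use the independence of $U^{\dpcm}_n$ and $N^{\dpcm}_n$ (guaranteed by strict causality of $C(Z)$) to write the mutual information as $\tfrac{1}{2}\log\bigl(1+\mathbb{E}(U^{\dpcm}_n)^2/\sigma^2_{\text{DPCM}}\bigr)$, and compute $\mathbb{E}(U^{\dpcm}_n)^2$ spectrally from $U^{\dpcm}_n=(\delta_n-c_n)*X^{\dpcm}_n-c_n*N^{\dpcm}_n$. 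Your explicit justification of the independence step is slightly more detailed than the paper's, which simply refers back to the sigma-delta analysis, but the argument is the same.
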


\vspace{1mm}

\begin{proof}
From Figure~\ref{fig:DPCMtest}, we have that
\begin{align}
U^{\dpcm}_n&=X^{\dpcm}_n-c_n*V^{\dpcm}_n\label{UnDPCMraw}\\
V^{\dpcm}_n&=U^{\dpcm}_n+N^{\dpcm}_n+c_n*V^{\dpcm}_n\label{UnDPCMrawb}
\end{align}
Substituting~\eqref{UnDPCMraw} in~\eqref{UnDPCMrawb} yields
\begin{align}
V^{\dpcm}_n&=X^{\dpcm}_n+N^{\dpcm}_n.\label{VnDPCM}
\end{align}
Using the fact that $\{X^{\dpcm}_n\}$ is a low-pass process, as before, we obtain
\begin{align}
\hat{X}^{\dpcm}_n&=h_n*(X^{\dpcm}_n+N^{\dpcm}_n)\nonumber\\
&=X^{\dpcm}_n+h_n*N^{\dpcm}_n.\label{LPFeqDPCM}
\end{align}
Since $\{N^{\dpcm}_n\}$ is AWGN with variance $\sigma^2_{\text{DPCM}}$, the variance of the filtered process $h_n*N^{\dpcm}_n$ is $\sigma^2_{\text{DPCM}}/L$. Thus,
\begin{align}
D=\mathbb{E}(X^{\dpcm}_n-\hat{X}^{\dpcm}_n)^2=\frac{\sigma^2_{\text{DPCM}}}{L}.\nonumber
\end{align}
As in the analysis of the test channel from Figure~\ref{fig:SigDelTest}, the scalar mutual information between $U^{\dpcm}_n$ and $U^{\dpcm}_n+N^{\dpcm}_n$ is given by
\begin{align}
I(U^{\dpcm}_n;U^{\dpcm}_n+N^{\dpcm}_n)=\frac{1}{2}\log\left(1+\frac{\mathbb{E}(U^{\dpcm}_n)^2}{\sigma^2_{\text{DPCM}}}\right)\label{Idpcm}.
\end{align}
Now, substituting~\eqref{VnDPCM} in~\eqref{UnDPCMraw} gives
\begin{align}
U^{\dpcm}_n&=(\delta_n-c_n)*X^{\dpcm}_n-c_n*N^{\dpcm}_n,\nonumber
\end{align}
and the variance of $U_n$ is therefore
\begin{align}
&\mathbb{E}(U^{\dpcm}_n)^2=\frac{1}{2\pi}\int_{-\pi}^{\pi}S^{\dpcm}_X(\omega)|1-C(\omega)|^2d\omega\nonumber\\
& \ \ \ \ \ \ \ \ \ \ \ \ \ \ +\frac{1}{2\pi}\int_{-\pi}^{\pi}S^{\dpcm}_N(\omega)|C(\omega)|^2 d\omega\nonumber\\
&=\frac{L\sigma_X^2}{2\pi}\int_{-\pi/L}^{\pi/L}|1-C(\omega)|^2 d\omega+\frac{\sigma^2_{\text{DPCM}}}{2\pi}\int_{-\pi}^{\pi}|C(\omega)|^2 d\omega.\label{UnVar}
\end{align}
Substituting~\eqref{UnVar} into~\eqref{Idpcm} establishes the second part of the proposition.
\end{proof}

\begin{remark}
In propositions~\ref{prop:SigDelRD} and~\ref{prop:DPCMRD} we derived the \emph{scalar} mutual information between the input and output of the AWGN test channels embedded in Figures~\ref{fig:SigDelTest} and~\ref{fig:DPCMtest}, respectively. As will become clear in Section~\ref{sec:scalarquant}, the scalar mutual information is closely related to the required quantization rate when a scalar memoryless quantizer is used within the sigma-delta or DPCM modulator. In~\cite{zke08,oz09}, the directed information was shown to be related to the required quantization rate when the quantizer is followed by an entropy coder. Here, we do not consider applying entropy coding to the quantizer's output as we require that the designed modulator be robust to the statistics of the input process, whereas entropy coding is very sensitive to the process statistics. Moreover, if the design of an A/D (or D/A) is considered, the appropriate merit for the modulator's complexity is the number of quantization levels within the scalar quantizer, which are not reduced by incorporating an entropy coder.
\end{remark}

\vspace{1mm}

Our main result now follows immediately from Propositions~\ref{prop:SigDelRD} and~\ref{prop:DPCMRD}.

\vspace{1mm}

\begin{theorem}
Let $\{X^{\sd}_n\}$ be any Gaussian stationary process with variance $\sigma_X^2$ whose PSD is zero for all $\omega\notin [-\pi/L,\pi/L]$, let $\{X^{\dpcm}_n\}$ be a flat low-pass Gaussian stationary process with PSD as in~\eqref{Xpsd}, and let $C(Z)$ be a strictly causal filter. The test channel from Figure~\ref{fig:SigDelTest} with
\begin{align}
\sigma^2_{\Sigma\Delta}=\frac{D}{\frac{1}{2\pi}\int_{-\pi/L}^{\pi/L}|1-C(\omega)|^2 d\omega},\nonumber
\end{align}
and the test channel from Figure~\ref{fig:DPCMtest} with
\begin{align}
\sigma^2_{\text{DPCM}}=L\cdot D,\nonumber
\end{align}
both achieve MSE distortion $D$ and their scalar mutual information satisfy
\begin{align}
I(U^{\sd}_n&;U^{\sd}_n+N^{\sd}_n)=I(U^{\dpcm}_n;U^{\dpcm}_n+N^{\dpcm}_n)\nonumber\\
&=\frac{1}{2}\log\bigg(1+\frac{1}{2\pi}\int_{-\pi}^{\pi}|C(\omega)|^2 d\omega\nonumber\\
& \ \ \ \ \ \ \ \ \ \ \ \ +\frac{\sigma^2_X}{D}\frac{1}{2\pi}\int_{-\pi/L}^{\pi/L}|1-C(\omega)|^2 d\omega\bigg).\nonumber
\end{align}
\label{thm:SigDelDPCMeq}
\end{theorem}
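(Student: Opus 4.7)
The proof will be an immediate consequence of Propositions~\ref{prop:SigDelRD} and~\ref{prop:DPCMRD} by direct substitution, so the plan is really just to check two algebraic identities and then observe that the resulting expressions match. No new ideas beyond what is already in the two propositions are required.

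First, for the sigma-delta side, I plug the prescribed
\begin{align}
\sigma^2_{\Sigma\Delta}=\frac{D}{\frac{1}{2\pi}\int_{-\pi/L}^{\pi/L}|1-C(\omega)|^2 d\omega}\nonumber
\end{align}
into the distortion formula of Proposition~\ref{prop:SigDelRD}; by construction the integral cancels and yields distortion $D$. I then substitute the same $\sigma^2_{\Sigma\Delta}$ into the mutual-information formula of Proposition~\ref{prop:SigDelRD}: the term $\sigma_X^2/\sigma^2_{\Sigma\Delta}$ becomes $(\sigma_X^2/D)\cdot\frac{1}{2\pi}\int_{-\pi/L}^{\pi/L}|1-C(\omega)|^2 d\omega$, which is exactly the third summand inside the logarithm in the theorem statement.

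Second, for the DPCM side, I apply Proposition~\ref{prop:DPCMRD} with $\sigma^2_{\text{DPCM}}=L\cdot D$. The distortion $\sigma^2_{\text{DPCM}}/L$ reduces to $D$, and in the mutual-information expression the coefficient $L\sigma_X^2/\sigma^2_{\text{DPCM}}$ collapses to $\sigma_X^2/D$. This produces the same logarithmic expression as on the sigma-delta side, establishing the claimed equality of the two scalar mutual informations.

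There is no real obstacle; the content of the theorem is conceptual rather than technical. What is worth emphasizing, and what the proof makes transparent, is that although the sigma-delta expression in Proposition~\ref{prop:SigDelRD} holds for an \emph{arbitrary} band-limited PSD with variance $\sigma_X^2$ (only the variance enters the rate and distortion formulas), the DPCM expression in Proposition~\ref{prop:DPCMRD} requires the specific flat PSD $S_X^{\dpcm}$ of~\eqref{Xpsd}. The role of the scaling $\sigma^2_{\Sigma\Delta}=L D / L\cdot\frac{1}{2\pi}\int_{-\pi/L}^{\pi/L}|1-C(\omega)|^2 d\omega$ is precisely to align the two noise-variance contributions at the quantizer input, so that both architectures exhibit the same rate-distortion trade-off under the common causal filter $C(Z)$.
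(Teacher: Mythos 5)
Your proof is correct and follows exactly the route the paper intends: the theorem is stated there as an immediate consequence of Propositions~\ref{prop:SigDelRD} and~\ref{prop:DPCMRD}, and your direct substitution of $\sigma^2_{\Sigma\Delta}$ and $\sigma^2_{\text{DPCM}}=L\cdot D$ into their distortion and mutual-information formulas is precisely that verification. No gaps.
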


\vspace{1mm}

This theorem indicates that for any stationary band-limited Gaussian process with variance $\sigma^2_X$, the sigma-delta test channel from Figure~\ref{fig:SigDelTest} achieves exactly the same rate-distortion trade-off as that of the DPCM test channel from Figure~\ref{fig:DPCMtest} with a stationary flat band-limited Gaussian input with the same variance, provided that the AWGN variances are scaled according to~\eqref{sigdelnoise}. Thus, Theorem~\ref{thm:SigDelDPCMeq} provides a unified framework for analyzing the performance of sigma-delta modulation and DPCM. A great advantage offered by such a unified framework, is that any result known for DPCM can be translated to a corresponding result for sigma-delta modulation, and vice versa. Theorems~\ref{thm:optfilt} and Corollary~\ref{thm:sigdeloptimality} below constitute two important examples of such results.

\vspace{1mm}

\begin{theorem}
Let $\{X^{\sd}_n\}$ be a Gaussian stationary process with variance $\sigma_X^2$ whose PSD is zero for all $\omega\notin [-\pi/L,\pi/L]$  and let $\mathcal{C}$ be a family of strictly causal filters. Define the ``virtual'' process $\{S_n\}$ as a Gaussian stationary process with PSD as in~\eqref{Xpsd}, and the ``virtual'' process $\{W_n\}$ as a Gaussian i.i.d. random process statistically independent of $\{S_n\}$ with variance $L\cdot D$, $D>0$. Let
\begin{align}
\sigma^{*2}_D&=\min_{C(Z)\in\mathcal{C}}\mathbb{E}\left(S_n-c_n*(S_n+W_n)\right)^2\nonumber\\
C_D^*(Z)&=\argmin_{C(Z)\in\mathcal{C}}\mathbb{E}\left(S_n-c_n*(S_n+W_n)\right)^2.\nonumber
\end{align}
If the filter $C(Z)$ in the sigma-delta test channel from Figure~\ref{fig:SigDelTest} belongs to $\mathcal{C}$ and the MSE distortion attained by this test channel is $D$, then
\begin{align}
I(U^{\sd}_n;U^{\sd}_n+N^{\sd}_n)\geq&\frac{1}{2}\log\left(1+\frac{\sigma^{*2}_D}{L\cdot D}\right),\label{Idmin}
\end{align}
with equality if $C(Z)=C_D^*(Z)$.
\label{thm:optfilt}
\end{theorem}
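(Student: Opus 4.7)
The plan is to reduce the statement to the DPCM side via the duality established in Theorem~\ref{thm:SigDelDPCMeq} and then identify $\mathbb{E}(U^{\dpcm}_n)^2$ with the noisy prediction error in the definition of $\sigma^{*2}_D$.

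First, fix any $C(Z)\in\mathcal{C}$ used in Figure~\ref{fig:SigDelTest} and assume the resulting MSE distortion is $D$; by Proposition~\ref{prop:SigDelRD} this pins down $\sigma^2_{\Sigma\Delta}=D/\bigl(\frac{1}{2\pi}\int_{-\pi/L}^{\pi/L}|1-C(\omega)|^2d\omega\bigr)$. Applying Theorem~\ref{thm:SigDelDPCMeq} with $\sigma^2_{\text{DPCM}}=L\cdot D$, the scalar mutual information of the sigma-delta test channel equals that of the DPCM test channel in Figure~\ref{fig:DPCMtest} driven by $\{X^{\dpcm}_n\}$ with PSD~\eqref{Xpsd}. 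So it suffices to lower bound the DPCM mutual information.

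Next, recall from the proof of Proposition~\ref{prop:DPCMRD} that the DPCM test channel yields
\begin{align*}
I(U^{\dpcm}_n;U^{\dpcm}_n+N^{\dpcm}_n)=\tfrac{1}{2}\log\!\left(1+\tfrac{\mathbb{E}(U^{\dpcm}_n)^2}{\sigma^2_{\text{DPCM}}}\right),
\end{align*}
and that $U^{\dpcm}_n=X^{\dpcm}_n-c_n*V^{\dpcm}_n$ with $V^{\dpcm}_n=X^{\dpcm}_n+N^{\dpcm}_n$ (see~\eqref{UnDPCMraw}--\eqref{VnDPCM}). The key identification is now simply distributional: $\{X^{\dpcm}_n\}$ has the PSD~\eqref{Xpsd}, which matches that of the virtual process $\{S_n\}$, and $\{N^{\dpcm}_n\}$ is i.i.d.\ Gaussian with variance $\sigma^2_{\text{DPCM}}=L\cdot D$ independent of $\{X^{\dpcm}_n\}$, which matches $\{W_n\}$. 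Hence jointly $(X^{\dpcm}_n,N^{\dpcm}_n)\stackrel{d}{=}(S_n,W_n)$, so $\mathbb{E}(U^{\dpcm}_n)^2=\mathbb{E}\bigl(S_n-c_n*(S_n+W_n)\bigr)^2$.

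By the definition of $\sigma^{*2}_D$ as the minimum of this expression over $C(Z)\in\mathcal{C}$, the quantity above is bounded below by $\sigma^{*2}_D$, with equality attained by $C(Z)=C_D^*(Z)$. Substituting back gives
\begin{align*}
I(U^{\sd}_n;U^{\sd}_n+N^{\sd}_n)=\tfrac{1}{2}\log\!\left(1+\tfrac{\mathbb{E}(U^{\dpcm}_n)^2}{L\cdot D}\right)\geq \tfrac{1}{2}\log\!\left(1+\tfrac{\sigma^{*2}_D}{L\cdot D}\right),
\end{align*}
which is~\eqref{Idmin}. The only conceptual subtlety, and the one point I would be careful about, is making the distributional identification $(X^{\dpcm}_n,N^{\dpcm}_n)\stackrel{d}{=}(S_n,W_n)$ explicit so that the optimization defining $\sigma^{*2}_D$ — formulated in terms of the virtual processes — can be invoked to bound $\mathbb{E}(U^{\dpcm}_n)^2$; beyond that, the argument is essentially a chain of the already-established identities.
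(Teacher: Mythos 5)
Your proposal is correct and follows essentially the same route as the paper's own proof: pin down $\sigma^2_{\Sigma\Delta}$ via Proposition~\ref{prop:SigDelRD}, invoke Theorem~\ref{thm:SigDelDPCMeq} with $\sigma^2_{\text{DPCM}}=L\cdot D$, identify $(X^{\dpcm}_n,N^{\dpcm}_n)$ with the virtual pair $(S_n,W_n)$ so that $\mathbb{E}(U^{\dpcm}_n)^2=\mathbb{E}\left(S_n-c_n*(S_n+W_n)\right)^2$ via~\eqref{UnDPCMraw},~\eqref{VnDPCM}, and~\eqref{Idpcm}, and then minimize over $\mathcal{C}$. The distributional identification you flag as the one subtle point is exactly the step the paper handles by simply setting $X^{\dpcm}_n=S_n$ and $N^{\dpcm}_n=W_n$, so there is no gap.
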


\vspace{2mm}

Theorem~\ref{thm:optfilt} states that for a target distortion $D$, the sigma-delta filter which minimizes the required compression rate is the optimal linear time-invariant MMSE estimator, within the class of constraints $\mathcal{C}$, for $S_n$ from the past of the noisy process $\{S_n+W_n\}$. For example, if $\mathcal{C}$ consists of all strictly causal finite-impulse response (FIR) filters of length $p$, the optimal filter $C(Z)$ is the optimal predictor of $S_n$ from the samples $\{S_{n-1}+W_{n-1},\ldots,S_{n-p}+W_{n-p}\}$, which can be easily calculated in closed-form.

The optimal sigma-delta filter design problem was studied by several authors, under various assumptions~\cite{jayantnoll,th78,noll78,gc87,sc62,dsqg08,do12}. However, to the best of our knowledge, the simple expression from Theorem~\ref{thm:optfilt} for the optimal filter as the optimal predictor of $S_n$ from the past of $\{S_n+W_n\}$ is novel. The references most relevant to Theorem~\ref{thm:optfilt}, are perhaps~\cite{sc62} and~\cite{dsqg08,do12}. In~\cite{sc62}, Spang and Schultheiss formulated an optimization problem for finding the best FIR filter with $p$ coefficients in a sigma-delta modulator with a scalar quantizer, under a fixed overload probability. Their optimization problem can be solved numerically, but no closed form solution was given.
In~\cite{dsqg08} and~\cite{do12} the design of an optimal \emph{unconstrained} sigma-delta filter was studied, under the assumption of a fixed scalar quantizer which can only be scaled in order to control the overload probability. Equations that characterize the optimal filter were derived. However, the obtained expressions usually yield filters with an infinite number of taps, and do not provide the solution to the constrained problem. It is also worth mentioning that for the case of a stationary Gaussian process $\{X_n\}$ with $L=1$ (sampling at the Nyquist rate) and \emph{known PSD}, the optimal infinite length filter under the assumption of high-resolution quantization is known to equal the optimal prediction filter of $X_n$ from its (clean) past~\cite{noll78}. As already mentioned in the introduction, the high-resolution assumption never holds when $L>1$ and therefore this result is inapplicable for over-sampled signals.

\vspace{2mm}

\begin{proof}[Proof of Theorem~\ref{thm:optfilt}]
By Proposition~\ref{prop:SigDelRD}, if the test channel from Figure~\ref{fig:SigDelTest} achieves MSE distortion $D$, we must have
\begin{align}
\sigma^2_{\Sigma\Delta}=\frac{D}{\frac{1}{2\pi}\int_{-\pi/L}^{\pi/L}|1-C(\omega)|^2 d\omega}.\nonumber
\end{align}
By Theorem~\ref{thm:SigDelDPCMeq}, the corresponding mutual information $I(U^{\sd}_n;U^{\sd}_n+N^{\sd}_n)$ is equal to the mutual information $I(U^{\dpcm}_n;U^{\dpcm}_n+N^{\dpcm}_n)$ in the DPCM test channel from Figure~\ref{fig:DPCMtest} with $X^{\dpcm}_n=S_n$, $N^{\dpcm}_n=W_n$ and $\sigma^2_{\text{DPCM}}=L\cdot D$. Thus,
\begin{align}
&I(U^{\sd}_n;U^{\sd}_n+N^{\sd}_n)=I(U^{\dpcm}_n;U^{\dpcm}_n+N^{\dpcm}_n)\nonumber\\
&=\frac{1}{2}\log\left(1+\frac{\mathbb{E}\left(S_n-c_n*(S_n+W_n)\right)^2}{L\cdot D}\right),\label{Idpcmwithcn}
\end{align}
where we have used~\eqref{UnDPCMraw},~\eqref{VnDPCM}, and~\eqref{Idpcm}, to arrive at~\eqref{Idpcmwithcn}.
It follows that among all filters in $\mathcal{C}$, the  filter that minimizes~\eqref{Idpcmwithcn} is $C_D^*(Z)$, and that it attains~\eqref{Idmin} with equality.
\end{proof}

\vspace{2mm}

It is interesting to note~\cite{zke08} that since $\{W_n\}$ is an i.i.d. process with variance $L\cdot D$ and $C(Z)$ is strictly causal, the mutual information~\eqref{Idpcmwithcn} can also be written as
\begin{align}
I(U^{\sd}_n&;U^{\sd}_n+N^{\sd}_n)\nonumber\\
&=\frac{1}{2}\log\left(\frac{\mathbb{E}\left(S_n+W_n-c_n*(S_n+W_n)\right)^2}{L\cdot D}\right).\label{Idpcmwithcn2}
\end{align}
Thus, the optimal predictor of $S_n$ from the past of $\{S_n+W_n\}$ is identical to the optimal predictor of $S_n+W_n$ from its past samples. When $C(Z)$ is taken as the (unique) infinite order optimal one-step prediction filter of $S_n+W_n$ from its past samples, the prediction error variance is the entropy power of the process $\{S_n+W_n\}$~\cite{berger71}, which equals
\begin{align}
2^{\frac{1}{2\pi}\int_{-\pi}^{\pi}\log\left(S_S(\omega)+L\cdot D \right)d\omega}=(L\cdot D)\left(1+\frac{\sigma_X^2}{D}\right)^{1/L}.\label{entpow}
\end{align}
Moreover, the infinite order prediction error
\begin{align}
E^{\text{pred}}_n\triangleq S_n+W_n-c_n*(S_n+W_n)\nonumber
\end{align}
is in this case a white process. This, together with~\eqref{entpow} implies that for the optimal unconstrained sigma-delta filter $C(Z)$ we must have
\begin{align}
S_{E^{\text{pred}}}(\omega)&\triangleq |1-C(\omega)|^2\left(L\cdot D+S_S(\omega)\right)\nonumber\\
&=(L\cdot D)\left(1+\frac{\sigma_X^2}{D}\right)^{1/L}, \ \forall\omega\in[-\pi,\pi)\label{infinitefiltspec}
\end{align}
Combining~\eqref{Idpcmwithcn2},~\eqref{entpow}, and~\eqref{infinitefiltspec} yields the following corollary.

\vspace{2mm}

\begin{corollary}
Let $\{X^{\sd}_n\}$ be a Gaussian stationary process with variance $\sigma_X^2$ whose PSD is zero for all $\omega\notin [-\pi/L,\pi/L]$. If the test channel from Figure~\ref{fig:SigDelTest} attains MSE distortion $D$, then
\begin{align}
I(U^{\sd}_n;U^{\sd}_n+N^{\sd}_n)\geq \frac{1}{2L}\log\left(1+\frac{\sigma_X^2}{D}\right).\label{unbiasedrd}
\end{align}
with equality if and only if $C(Z)$ is a strictly causal filter satisfying
\begin{align}
|1-C(\omega)|^2=\begin{cases}
\left(1+\frac{\sigma_X^2}{D}\right)^{-(L-1)/L} & \omega\in[-\frac{\pi}{L},\frac{\pi}{L}]\\
\left(1+\frac{\sigma_X^2}{D}\right)^{1/L} & \omega\notin[-\frac{\pi}{L},\frac{\pi}{L}],
\end{cases}
\label{optinfinitefilt}
\end{align}
and
\begin{align}
\sigma^2_{\Sigma\Delta}=\frac{D}{\frac{1}{2\pi}\int_{-\pi/L}^{\pi/L}|1-C(\omega)|^2 d\omega}=\frac{L\cdot D}{\left(1+\frac{\sigma_X^2}{D}\right)^{-(L-1)/L}}.\nonumber
\end{align}
\label{thm:sigdeloptimality}
\end{corollary}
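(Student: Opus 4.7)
The plan is to specialize Theorem~\ref{thm:optfilt} to $\mathcal{C}$ equal to the class of all strictly causal filters, so that $\sigma^{*2}_D$ in~\eqref{Idmin} becomes the unconstrained minimum-mean-squared-error for predicting $S_n$ from the past of $\{S_n+W_n\}$. A direct calculation shows that this prediction problem is equivalent to one-step linear prediction of $\{S_n+W_n\}$ from its own past: since $\{W_n\}$ is white and independent of everything else, $W_n$ is orthogonal to the past of $\{S_n+W_n\}$, so it contributes an extra $L\cdot D$ to the prediction MSE of $S_n+W_n$ relative to that of $S_n$. Thus the two minimum MSEs differ by exactly $L\cdot D$.

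Next I would invoke the Kolmogorov--Szeg\H{o} formula: the minimum one-step prediction error variance of a stationary Gaussian process equals its entropy power $\bar\sigma^2_Y = 2^{\frac{1}{2\pi}\int_{-\pi}^{\pi}\log(S_S(\omega)+LD)\,d\omega}$. Because the PSD $S_S(\omega)+LD$ is piecewise constant --- equal to $L(\sigma_X^2+D)$ on $[-\pi/L,\pi/L]$ and to $LD$ outside --- the log-integral splits cleanly into two pieces and yields the closed form in~\eqref{entpow}, namely $\bar\sigma^2_Y = LD(1+\sigma_X^2/D)^{1/L}$. Substituting $\sigma^{*2}_D=\bar\sigma^2_Y-LD$ into~\eqref{Idmin} then reduces $\tfrac{1}{2}\log(1+\sigma^{*2}_D/(LD))$ to $\tfrac{1}{2L}\log(1+\sigma_X^2/D)$, giving~\eqref{unbiasedrd}.

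For the equality characterization, the optimal causal one-step predictor is precisely the monic causal \emph{whitening filter} of $\{S_n+W_n\}$: equality in~\eqref{Idmin} holds iff passing $\{S_n+W_n\}$ through $1-C(Z)$ yields a white process of variance $\bar\sigma^2_Y$. In the frequency domain this reads $|1-C(\omega)|^2\bigl(S_S(\omega)+LD\bigr)=\bar\sigma^2_Y$, and the two-level PSD immediately produces the piecewise formula~\eqref{optinfinitefilt}. The stated expression for $\sigma^2_{\Sigma\Delta}$ then follows by a one-line substitution of the in-band magnitude into the distortion constraint $\sigma^2_{\Sigma\Delta}=D/(\tfrac{1}{2\pi}\int_{-\pi/L}^{\pi/L}|1-C(\omega)|^2\,d\omega)$ from Theorem~\ref{thm:optfilt}.

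The only real subtlety, which I view as the main obstacle rather than a deep obstruction, is verifying that the prescribed magnitude is actually realizable by a monic strictly causal filter. This is guaranteed by the Paley--Wiener / Szeg\H{o} condition: a direct evaluation shows that $\tfrac{1}{2\pi}\int_{-\pi}^{\pi}\log|1-C(\omega)|^2\,d\omega$ vanishes for the spectrum in~\eqref{optinfinitefilt} (the in-band and out-of-band contributions cancel exactly), so a strictly causal minimum-phase realization with constant term $1$ exists. Everything else is bookkeeping.
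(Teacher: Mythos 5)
Your proposal is correct and follows essentially the same route as the paper: it reduces the unconstrained case of Theorem~\ref{thm:optfilt} to one-step prediction of $S_n+W_n$ from its own past (using whiteness of $W_n$ and strict causality of $C(Z)$), evaluates the minimum error via the Kolmogorov--Szeg\H{o} entropy-power formula~\eqref{entpow}, characterizes equality by the whitening condition $|1-C(\omega)|^2\left(S_S(\omega)+L\cdot D\right)=\text{const}$, and checks realizability through the vanishing log-integral exactly as in Remark~\ref{rem:filtexist}. No gaps worth noting.
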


\vspace{1mm}

\begin{remark}
Note that the existence of a strictly causal filter $C(Z)$ which satisfies~\eqref{optinfinitefilt} is guaranteed by Wiener's spectral-factorization theory~\cite{berger71} due to the readily verified fact that
\begin{align}
2^{\frac{1}{2\pi}\int_{-\pi}^{\pi}\log|1-C(\omega)|^2 d\omega}=1.\nonumber
\end{align}
The optimal filter induces a two-level frequency response for $|1-C(\omega)|^2$.
In~\cite{oz09} {\O}stergaard and Zamir used sigma-delta modulation to attain the optimal multiple-description rate-distortion region. Interestingly, the optimal filter $C(Z)$ in their scheme also induced a two-level response for $|1-C(\omega)|^2$. We also note that the optimality of the unconstrained filter specified by~\eqref{optinfinitefilt} can be deduced as a special case of~\cite[Section IV]{dsqg08}.
\label{rem:filtexist}
\end{remark}

\vspace{1mm}

\begin{remark}
Note that for the optimal unconstrained filter $C(Z)$ specified by~\eqref{optinfinitefilt}, the pre- and post-filters from Figure~\ref{fig:PrePotsSigDelTest} have no effect as long as the PSD of the input signal $\{X^{\sd}_n\}$ is zero for all $\omega\notin [-\pi/L,\pi/L]$. However, filters with a finite number of taps will never incur a flat frequency response in the interval $[-\pi/L,\pi/L]$, and for such filters the systems from Figure~\ref{fig:SigDelTest} and Figure~\ref{fig:PrePotsSigDelTest} will not be equivalent.
\end{remark}

\vspace{1mm}

\begin{remark}
The output of the test channel from Figure~\ref{fig:SigDelTest} (as well as that from Figure~\ref{fig:DPCMtest}) is of the form $\hat{X}^{\sd}_n=X^{\sd}_n+E^{\sd}_n$, where $E^{\sd}_n$ has zero mean and variance $D$, and is statistically independent of $X^{\sd}_n$. This estimate can be further improved by applying scalar MMSE estimation for $X^{\sd}_n$ from $\hat{X}^{\sd}_n$. This boils down to producing the estimate $\hat{\tilde{X}}^{\sd}_n=\alpha\hat{X}^{\sd}_n$, where
\begin{align}
\alpha=\frac{\sigma_X^2}{\sigma_X^2+D}.\nonumber
\end{align}
Consequently, the obtained MSE distortion is reduced to
\begin{align}
\tilde{D}=\mathbb{E}(X^{\sd}_n-\alpha\hat{X}^{\sd}_n)^2=\frac{\sigma_X^2\cdot D}{\sigma_X^2+D}.\nonumber
\end{align}
It is straightforward to verify~\cite{ramibook} that with this improvement, the sigma-delta test channel from Figure~\ref{fig:SigDelTest} with $C(Z)$ and $\sigma^2_{\Sigma\Delta}$ as specified in Corollary~\ref{thm:sigdeloptimality} attains
\begin{align}
I(U^{\sd}_n;U^{\sd}_n+V^{\sd}_n)=\frac{1}{2L}\log\left(\frac{\sigma_X^2}{\tilde{D}} \right),\nonumber
\end{align}
which is the optimal rate-distortion function for a stationary Gaussian source $\{X^{\sd}_n\}$ with PSD as in~\eqref{Xpsd}. It follows that the sigma-delta test channel from Figure~\ref{fig:SigDelTest} with $C(Z)$ and $\sigma^2_{\Sigma\Delta}$ as specified in Corollary~\ref{thm:sigdeloptimality} is minimax optimal for the class of all stationary Gaussian sources with variance $\sigma^2_X$ and PSD that equals zero for all $\omega\notin [-\pi/L,\pi/L]$, i.e., no other system can achieve MSE distortion $\tilde{D}$ with a smaller compression rate, universally for all sources in this class.
\label{rem:RdOptimality}
\end{remark}

\subsection{Extension to Frequency-Weighted Mean Squared Error Distortion}
\label{subsec:fwmse}

In many applications, higher values of distortion are acceptable in certain frequency bands while smaller distortion is permitted in other bands. The MSE distortion measure is inadequate for such scenarios, and a commonly used distortion measure, that (partially) captures such perceptual effects, is the frequency-weighted mean squared error (FWMSE) criterion. Under this criterion, the distortion is measured as
\begin{align}
D_{\text{FWMSE}}\triangleq\frac{1}{2\pi}\int_{-\pi}^{\pi} P(\omega)S_E(\omega) d\omega,
\end{align}
where $P(\omega)$ is a non-negative weight function, and $S_E(\omega)$ is the PSD of the error process $E_n\triangleq X^{\sd}_n-\hat{X}^{\sd}_n$. Note that for $P(\omega)=1, \forall \omega\in[-\pi,\pi)$, the FWMSE criterion reduces to the MSE one. The next theorem shows that the constrained optimal sigma-delta filter under the FWMSE criterion is the optimal constrained prediction filter of a noisy process defined according to the weight function $P(\omega)$.

\vspace{1mm}

\begin{theorem}
Let $\{X^{\sd}_n\}$ be a Gaussian stationary process with variance $\sigma_X^2$ whose PSD is zero for all $\omega\notin [-\pi/L,\pi/L]$, $P(\omega)$ a weighting function which forms a valid PSD, and $\mathcal{C}$ a family of strictly causal filters.
Define the ``virtual'' process  $\{S_n\}$ as a Gaussian stationary process with PSD
\begin{align}
S^{\text{FWMSE}}_X(\omega)=\begin{cases}
L\sigma_X^2 P(\omega) & \text{for } |\omega|\leq \pi/L \\
0 & \text{for } \pi/L<|\omega|<\pi
\end{cases},\label{XfwmsePSD}
\end{align}
and the ``virtual'' process $\{W_n\}$ as a Gaussian i.i.d. random process statistically independent of $\{S_n\}$ with variance \mbox{$L\cdot D_{\text{FWMSE}}$}, $D_{\text{FWMSE}}>0$. Let
\begin{align}
\sigma^{*2}_{D_{\text{FWMSE}}}&=\min_{C(Z)\in\mathcal{C}}\mathbb{E}\left(S_n-c_n*(S_n+W_n)\right)^2\nonumber\\
C_{D_{\text{FWMSE}}}^*(Z)&=\argmin_{C(Z)\in\mathcal{C}}\mathbb{E}\left(S_n-c_n*(S_n+W_n)\right)^2.\nonumber
\end{align}
If the filter $C(Z)$ in the sigma-delta test channel from Figure~\ref{fig:SigDelTest} belongs to $\mathcal{C}$ and the FWMSE distortion w.r.t. $P(\omega)$ attained by this test channel is $D_{\text{FWMSE}}$, then
\begin{align}
I(U^{\sd}_n;U^{\sd}_n+N^{\sd}_n)\geq&\frac{1}{2}\log\left(1+\frac{\sigma^{*2}_{D_{\text{FWMSE}}}}{L\cdot {D_{\text{FWMSE}}}}\right),\nonumber
\end{align}
with equality if $C(Z)=C_{D_{\text{FWMSE}}}^*(Z)$.
\label{thm:optfiltFWMSE}
\end{theorem}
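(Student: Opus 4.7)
The plan is to mirror the proof of Theorem~\ref{thm:optfilt}, but since the FWMSE criterion introduces a non-trivial weighting, the direct appeal to the DPCM test channel of Figure~\ref{fig:DPCMtest} with a flat input must be replaced by an explicit spectral computation that matches the sigma-delta mutual information to the virtual MMSE prediction error expression.

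First I would compute the FWMSE distortion induced by the sigma-delta test channel. From the identity $\hat{X}^{\sd}_n=X^{\sd}_n+h_n*(\delta_n-c_n)*N^{\sd}_n$ established in the proof of Proposition~\ref{prop:SigDelRD}, the error process has PSD $|H(\omega)|^2|1-C(\omega)|^2\sigma^2_{\Sigma\Delta}$; since $H(\omega)$ is the ideal low-pass filter on $[-\pi/L,\pi/L]$, this PSD collapses to $|1-C(\omega)|^2\sigma^2_{\Sigma\Delta}$ on the passband and to zero outside it, so that
\begin{align}
D_{\text{FWMSE}}=\sigma^2_{\Sigma\Delta}\cdot\frac{1}{2\pi}\int_{-\pi/L}^{\pi/L}P(\omega)|1-C(\omega)|^2 d\omega.\nonumber
\end{align}
Solving for $\sigma^2_{\Sigma\Delta}$ and substituting into the mutual information formula from Proposition~\ref{prop:SigDelRD} yields an explicit closed-form expression for $I(U^{\sd}_n;U^{\sd}_n+N^{\sd}_n)$ as a function of $C(\omega)$, $P(\omega)$, and $D_{\text{FWMSE}}$.

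Next I would compute the MMSE prediction error for the virtual processes. Because $\{S_n\}$ has PSD $S^{\text{FWMSE}}_X(\omega)$ given by~\eqref{XfwmsePSD}, the filter $C(Z)$ is strictly causal, and $\{W_n\}$ is i.i.d.\ with variance $L\cdot D_{\text{FWMSE}}$ and independent of $\{S_n\}$, a direct spectral calculation gives
\begin{align}
&\mathbb{E}(S_n-c_n*(S_n+W_n))^2\nonumber\\
&=\frac{L\sigma^2_X}{2\pi}\int_{-\pi/L}^{\pi/L}|1-C(\omega)|^2 P(\omega)d\omega+\frac{L\cdot D_{\text{FWMSE}}}{2\pi}\int_{-\pi}^{\pi}|C(\omega)|^2 d\omega.\nonumber
\end{align}
Dividing by $L\cdot D_{\text{FWMSE}}$ and adding $1$, I would recognize the result as precisely the argument of the logarithm obtained in the previous step, so that
\begin{align}
I(U^{\sd}_n;U^{\sd}_n+N^{\sd}_n)=\frac{1}{2}\log\left(1+\frac{\mathbb{E}(S_n-c_n*(S_n+W_n))^2}{L\cdot D_{\text{FWMSE}}}\right).\nonumber
\end{align}

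Finally, since the logarithm is monotonically increasing, minimizing the sigma-delta mutual information over $C(Z)\in\mathcal{C}$ subject to the FWMSE distortion being $D_{\text{FWMSE}}$ is equivalent to minimizing the MMSE prediction error over the same constraint class; this minimum is attained at $C_{D_{\text{FWMSE}}}^*(Z)$ with value $\sigma^{*2}_{D_{\text{FWMSE}}}$, which delivers the claimed bound with equality. The main obstacle is essentially bookkeeping: the factor $L$ multiplying $\sigma^2_X P(\omega)$ in~\eqref{XfwmsePSD}, the weight $P(\omega)$ itself, and the rescaling of the virtual noise variance to $L\cdot D_{\text{FWMSE}}$ must all combine correctly so that the two expressions coincide for every admissible $C(Z)$; there is no analog of the clean identity $D=\sigma^2_{\text{DPCM}}/L$ from Proposition~\ref{prop:DPCMRD} to lean on here, so the spectral match has to be verified directly.
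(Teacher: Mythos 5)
Your proof is correct and follows essentially the same route as the paper's sketch: compute the sigma-delta distortion under the FWMSE weighting (which leaves the mutual information formula of Proposition~\ref{prop:SigDelRD} untouched) and match it spectrally to the prediction-error expression for the virtual processes $\{S_n\},\{W_n\}$, which is exactly what the paper does by rerunning Propositions~\ref{prop:SigDelRD} and~\ref{prop:DPCMRD} with $P(\omega)$ and the PSD~\eqref{XfwmsePSD} and then repeating the argument of Theorem~\ref{thm:optfilt}. One minor quibble: the clean identity $D=\sigma^2_{\text{DPCM}}/L$ \emph{does} survive for the virtual DPCM channel, because the DPCM error identity $V^{\dpcm}_n=X^{\dpcm}_n+N^{\dpcm}_n$ makes the distortion independent of the input PSD --- this is precisely why the paper can set the virtual noise variance to $L\cdot D_{\text{FWMSE}}$ --- so your direct spectral verification is a valid but not strictly necessary substitute for that observation.
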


\vspace{1mm}

\begin{proof}[Sketch of proof:]
The proof is fairly similar to that of Theorem~\ref{thm:optfilt}. Thus, for brevity, we omit the full proof and only highlight its main steps:
\begin{itemize}
\item Repeat the derivation of Proposition~\ref{prop:SigDelRD} where now the MSE distortion is replaced by FWMSE distortion. Note that this has no effect on $I(U^{\sd}_n;U^{\sd}_n+N^{\sd}_n)$.
\item Repeat the derivation of Proposition~\ref{prop:DPCMRD} where the PSD of the input process is~\eqref{XfwmsePSD}, rather than~\eqref{Xpsd}. Note that this changes $I(U^{\dpcm}_n;U^{\dpcm}_n+N^{\dpcm}_n)$, but has no effect on the attained distortion.
\item It follows that the DPCM test channel for the process $\{S_n\}$ under MSE distortion is equivalent to the sigma-delta test channel with input $\{X^{\sd}_n\}$ under FWMSE distortion, in the sense that in both channels if the attained distortion is $D_{\text{FWMSE}}$ (under the appropriate distortion measure), then
\begin{align}
&I(U^{\sd}_n;U^{\sd}_n+N^{\sd}_n)=I(U^{\dpcm}_n;U^{\dpcm}_n+N^{\dpcm}_n)\nonumber\\
&=\frac{1}{2}\log\left(1+\frac{\mathbb{E}\left(S_n-c_n*(S_n+W_n)\right)^2}{L\cdot D_{\text{FWMSE}}}\right).\nonumber
\end{align}
\end{itemize}
\end{proof}

\subsection{Sigma-Delta Modulation with an Interleaved Vector Quantizer}
\label{subsec:vecquantization}
The goal of this short subsection is to give the test channel from Figure~\ref{fig:SigDelTest} an operational meaning, i.e., to show how the AWGN from the figure can be replaced with a lossy source code of rate $R=I(U^{\sd}_n;U^{\sd}_n+N^{\sd}_n)$ whose incurred quantization noise is distributed as $N_n^{\sd}$. As already mentioned, the key idea is to use an interleaver~\cite{gv05,zke08,oz09}, as we now recall.

Assume that $\{X_n^{\sd}\}$, the input process to the sigma-delta modulator, has a decaying memory, such that $X_n^{\sd}$ is essentially independent of all samples of sufficiently distant sampling times. In order to compress an $N$-dimensional vector
\begin{align}
\bx^{\sd}=[X_1^{\sd},\ldots,X_N^{\sd}],\nonumber
\end{align}
containing $N$ consecutive samples of the process $\{X_n^{\sd}\}$, we first split it into $K$ vectors
\begin{align}
\bx^{\sd}_k=[X_{(k-1)M+1}^{\sd},\ldots,X_{kM}^{\sd}], \ \ \  k=1,\ldots, K, \nonumber
\end{align}
where $M\triangleq N/K$.
Now, we can apply $K$ parallel sigma-delta modulators, one for each such vector, where the only coupling between the $K$ parallel systems is through the quantization step, which is applied jointly on all of them, as depicted in Figure~\ref{fig:parallelsigdel}.
\begin{figure*}[]
\begin{center}
\psset{unit=0.6mm}
\begin{pspicture}(0,-42)(250,40)
\rput(0,30){
\rput(0,1){$\bx^{\sd}_1$}\psline{->}(5,0)(20,0)\pscircle(25,0){5}\rput(25,0){$\Sigma$}\rput(18,-4){$-$}\psline{->}(30,0)(105,0)\rput(85,4){$U^{\sd}_{1,n}$}
\psframe(105,-5)(121,5)\rput(113,0){\small$Q_1(\cdot)$}
\psline{->}(121,0)(160,0)\rput(140,4){$U^{\sd}_{1,n}+N^{\sd}_{1,n}$}\psframe(160,-13)(210,13)\rput(185,16){$H(\omega)$}
\psline{->}(135,0)(135,-20)(90,-20)\pscircle(85,-20){5}\rput(85,-20){$\Sigma$}\rput(78,-16){$-$}\psline{->}(85,0)(85,-15)
\psline{->}(80,-20)(55,-20)\rput(73,-24){$N^{\sd}_{1,n}$}
\psframe(35,-25)(55,-15)\rput(45,-20){$C(Z)$}\psline{->}(35,-20)(25,-20)(25,-5)
\rput(160,2.5){
\psline{->}(5,-5)(45,-5)\rput(45,-8){$\omega$}
\psline(18,-5)(18,5)(32,5)(32,-5)\rput(34,5){\small $1$}
\rput(16,-10){$-\frac{\pi}{L}$}\rput(32,-10){$\frac{\pi}{L}$}
}
\psline{->}(210,0)(235,0)\rput(241,1){$\mathbf{\hat{x}}^{\sd}_1$}
}
\rput(112,0){\Large$\vdots$}

\rput(0,-18){
\rput(0,1){$\bx^{\sd}_M$}\psline{->}(5,0)(20,0)\pscircle(25,0){5}\rput(25,0){$\Sigma$}\rput(18,-4){$-$}\psline{->}(30,0)(105,0)\rput(85,4){$U^{\sd}_{K,n}$}
\psframe(105,-5)(121,5)\rput(113,0){\small$Q_K(\cdot)$}
\psline{->}(121,0)(160,0)\rput(140,4){$U^{\sd}_{K,n}+N^{\sd}_{K,n}$}\psframe(160,-13)(210,13)\rput(185,16){$H(\omega)$}
\psline{->}(135,0)(135,-20)(90,-20)\pscircle(85,-20){5}\rput(85,-20){$\Sigma$}\rput(78,-16){$-$}\psline{->}(85,0)(85,-15)
\psline{->}(80,-20)(55,-20)\rput(73,-24){$N^{\sd}_{K,n}$}
\psframe(35,-25)(55,-15)\rput(45,-20){$C(Z)$}\psline{->}(35,-20)(25,-20)(25,-5)
\rput(160,2.5){
\psline{->}(5,-5)(45,-5)\rput(45,-8){$\omega$}
\psline(18,-5)(18,5)(32,5)(32,-5)\rput(34,5){\small $1$}
\rput(16,-10){$-\frac{\pi}{L}$}\rput(32,-10){$\frac{\pi}{L}$}
}
\psline{->}(210,0)(235,0)\rput(241,1){$\mathbf{\hat{x}}^{\sd}_K$}
}

\psframe[linestyle=dashed](102,-27)(122.5,37)\rput(110,41){$\underline{Q}(\cdot)$}

\end{pspicture}
\end{center}
\caption{$K$ parallel sigma-delta modulators coupled by an $K$-dimensional quantizer $\underline{Q}(\cdot)$.}
\label{fig:parallelsigdel}
\end{figure*}
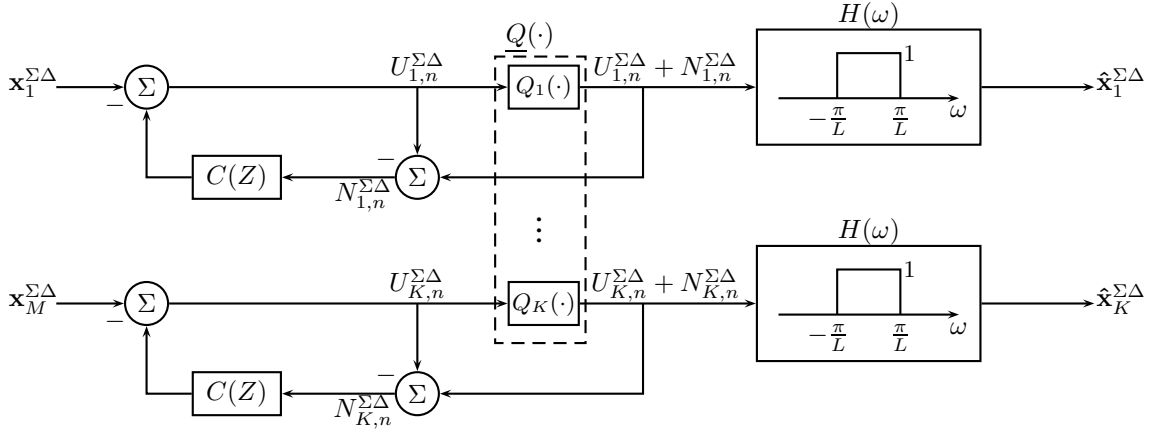 
By our assumption that $\{X_n^{\sd}\}$ has decaying memory, if $M$ is large enough the $K$ inputs that enter the quantizer $\underline{Q}(\cdot)=[Q_1(\cdot),\ldots,Q_K(\cdot)]$ are i.i.d. random variables distributed as $U^{\sd}_n$ from Figure~\ref{fig:SigDelTest}. For large enough $K$, standard rate-distortion arguments imply that there exists a vector quantizer with rate $I(U^{\sd}_n;U^{\sd}_n+N^{\sd}_n)$ that induces quantization noise distributed as $N^{\sd}_n$.

\section{Sigma-Delta Modulation with a Scalar Uniform Quantizer}
\label{sec:scalarquant}

The previous subsection showed how to replace the AWGN channel in Figure~\ref{fig:SigDelTest} with a vector quantizer whose rate is arbitrarily close to $R=I(U^{\sd}_n;U^{\sd}_n+N^{\sd}_n)$ and whose induced quantization noise is distributed as $N^{\sd}_n$. The inputs to the vector quantizer are vectors of i.i.d. Gaussian components. Thus, any ``off-the-shelf'' rate--distortion optimal vector quantizer for an i.i.d. Gaussian source can be used. The total sigma-delta compression system that is obtained is therefore simple in the sense that it only requires the vector quantizer to be good for quantizing an i.i.d. Gaussian source, which is a standard task, rather than requiring it to be a good quantizer for a band-limited Gaussian source.

However, the sigma-delta modulation architecture is mainly used for A/D and D/A conversion. In such applications, vector quantization is typically out of the question, and simple uniform scalar quantizers of finite support are used. For such quantizers, the quantization error is composed of two main factors~\cite{jayantnoll}: \emph{granular errors} that correspond to the quantization error in the case where the input signal falls within the quantizer's support, and \emph{overload errors} that correspond to the case where the input signal falls outside the quantizer's support. Due to the feedback loop, inherent to the sigma-delta modulator, errors of the latter kind, whose magnitude is not bounded, may have a disastrous effect as they jeopardize the system's stability. In order to avoid such errors, the support of the quantizer has to be chosen appropriately. As the support of the quantizer determines its rate for a given quantization resolution, the overload probability can be controlled by increasing the quantization rate.\footnote{As discussed in Section~\ref{subsec:contributions}, one can try to limit the effect of overload errors by placing various constraints on $C(Z)$. Here, we restrict attention to controlling the overload probability.}

We shall show that, given that overload errors did not occur, the quantization noise can be modeled as an additive noise. Thus, the test channel from Figure~\ref{fig:SigDelTest} accurately predicts the total distortion incurred by a sigma-delta A/D (or D/A) in this case. Moreover, the overload probability is a doubly exponentially decreasing function of $R-I(U^{\sd}_n;U^{\sd}_n+N^{\sd}_n)$, where $2^R$ are the number of levels in the scalar quantizer. Thus, fixing the desired overload error probability as $P_{ol}$, we may achieve the MSE distortion predicted by the test channel from Figure~\ref{fig:SigDelTest} (characterized in Proposition~\ref{prop:SigDelRD}) with a scalar quantizer whose rate is $I(U^{\sd}_n;U^{\sd}_n+N^{\sd}_n)+\delta(P_{ol})$, where $\delta(P_{ol})=\m{O}\left(\log\log\left(\tfrac{1}{P_{ol}}\right)\right)$.

Let $Q_{R,\sigma^2}(\cdot)$ be a uniform quantizer with quantization step $\sqrt{12\sigma^2}$ and $2^R$ quantization levels, such that the quantizer support is $[-\Gamma/2,\Gamma/2)$, where $\Gamma\triangleq 2^R\sqrt{12\sigma^2}$, see Figure~\ref{fig:scalarquant}. Our goal is to analyze the distortion and overload probability attained by a sigma-delta modulator that uses a $Q_{R,\sigma^2_{\sd}}(\cdot)$ quantizer, as a function of $R$ and $\sigma^2_{\sd}$.

Clearly, if we employ the scalar sigma-delta modulator on a long enough input sequence, an overload event will eventually occur. As discussed above, the effects of overload errors can be amplified due to the feedback loop, and in this case the average MSE may significantly grow. We therefore split the input sequence into finite blocks of length $N$, and initialize the memory of the filter $C(Z)$ with zeros before the beginning of each new block. This makes sure that the effect of an overload error in the original system is restricted to the block where it occurs.

The analysis is made much simpler by introducing a subtractive \emph{dither}\cite{ramibook}. Namely, let $\{Z_n\}$ be a sequence of i.i.d. random variables uniformly distributed over the interval $[-\sqrt{12\sigma^2_{\sd}}/2,\sqrt{12\sigma^2_{\sd}}/2)$. In order to quantize $U^{\sd}_n$, we add $Z_n$ to it before applying the quantizer, and subtract $Z_n$ afterwards, such that the obtained result is $Q_{R,\sigma^2_{\sd}}(U^{\sd}_n+Z_n)-Z_n$. Adding and subtracting $U^{\sd}_n$, we get $U^{\sd}_n+\left(Q_{R,\sigma^2_{\sd}}(U^{\sd}_n+Z_n)-(U^{\sd}_n+Z_n)\right)$, and the quantization error is therefore
\begin{align}
N_n&\triangleq Q_{R,\sigma^2_{\sd}}(U^{\sd}_n+Z_n)-(U^{\sd}_n+Z_n)\label{QuantNoise1}
\end{align}

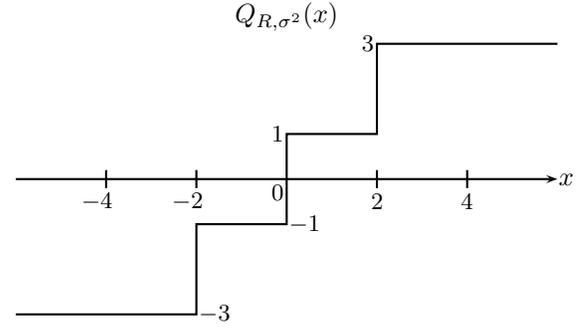
\begin{figure}[]
\begin{center}
\psset{unit=0.6mm}
\begin{pspicture}(0,-33)(125,40)

\psline{->}(0,0)(120,0)\rput(122,0){$x$}
\psline(20,2)(20,-2)\rput(18,-5){\small$-4$}
\psline(40,2)(40,-2)\rput(38,-5){\small$-2$}
\rput(58,-3){\small$0$}
\psline(80,2)(80,-2)\rput(80,-5){\small$2$}
\psline(100,2)(100,-2)\rput(100,-5){\small$4$}
\psline(0,-30)(40,-30)(40,-10)(60,-10)(60,10)(80,10)(80,30)(120,30)
\rput(44,-30){\small$-3$}
\rput(64,-10){\small$-1$}
\rput(58,10){\small$1$}
\rput(78,30){\small$3$}
\rput(60,36){$Q_{R,\sigma^2}(x)$}
\end{pspicture}
\end{center}
\caption{An illustration of $Q_{R,\sigma^2}(\cdot)$ for $R=2$ and $\sigma^2=1/3$.}
\label{fig:scalarquant}
\end{figure} 

\begin{figure*}[]
\begin{center}
\psset{unit=0.6mm}
\begin{pspicture}(0,0)(270,55)
\rput(0,30){
\rput(0,1){$X^{\sd}_n$}\psline{->}(5,0)(20,0)\pscircle(25,0){5}\rput(25,0){$\Sigma$}\rput(18,-4){$-$}\psline{->}(30,0)(95,0)\rput(85,4){$U^{\sd}_n$}
\pscircle(100,0){5}\rput(100,0){$\Sigma$}\rput(95,7){$+$}
\psline{->}(105,0)(115,0)\psframe(115,-5)(140,5)\rput(127,0){\scriptsize$Q_{R,\sigma^2_{\sd}}(\cdot)$}\psline{->}(140,0)(150,0)
\pscircle(155,0){5}\rput(155,0){$\Sigma$}\rput(150,7){$-$}

\rput(100,23){$Z_n$}\psline{->}(100,20)(100,5) \psline{->}(100,12)(155,12)(155,5)
\psline{->}(160,0)(195,0)\rput(178,4){$U^{\sd}_n+N^{\sd}_n$}
\psframe(195,-13)(245,13)\rput(220,16){$H(\omega)$}
\psline{->}(188,0)(188,-20)(90,-20)\pscircle(85,-20){5}\rput(85,-20){$\Sigma$}\rput(78,-16){$-$}\psline{->}(85,0)(85,-15)
\psline{->}(80,-20)(55,-20)\rput(73,-24){$N^{\sd}_n$}
\psframe(35,-25)(55,-15)\rput(45,-20){$C(Z)$}\psline{->}(35,-20)(25,-20)(25,-5)
\rput(195,2.5){
\psline{->}(5,-5)(45,-5)\rput(45,-8){$\omega$}
\psline(18,-5)(18,5)(32,5)(32,-5)\rput(34,5){\small $1$}
\rput(16,-10){$-\frac{\pi}{L}$}\rput(32,-10){$\frac{\pi}{L}$}
}
\psline{->}(245,0)(255,0)\rput(261,1){$\hat{X}^{\sd}_n$}
}
\end{pspicture}
\end{center}
\caption{A sigma-delta modulator with a dithered scalar uniform quantizer. The input is assumed to be over-sampled at $L$ times the Nyquist rate, and the dither sequence $\{Z_n\}$ is assumed to be an i.i.d. sequence of random variables uniformly distributed over the interval $\left[-\sqrt{12\sigma^2_{\sd}}/2,\sqrt{12\sigma^2_{\sd}}/2\right)$ and statistically independent $\{X^{\sd}_n\}$.}
\label{fig:ScalarSigDel}
\end{figure*}
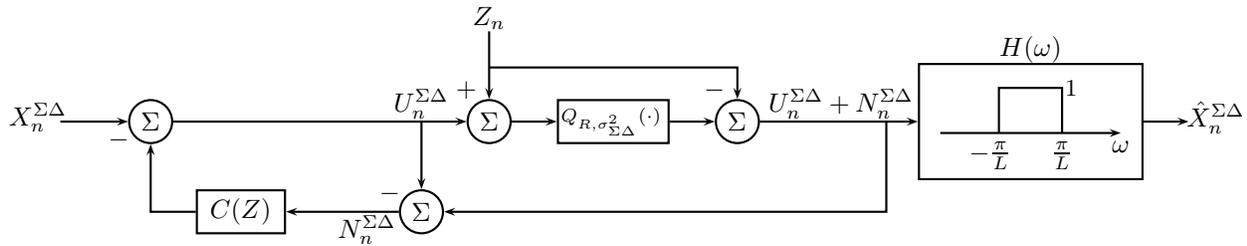 

The main result in this section is the following.
\begin{theorem}\label{thm:scalarSigDel}
Let $D$ be the MSE distortion attained by the test channel in Figure~\ref{fig:SigDelTest} with a filter $C(Z)$ of finite length, and $I(U^{\sd}_n;U^{\sd}_n+N^{\sd}_n)$ the scalar mutual information between the input and output of the AWGN channel in the same figure. For any $0<P_{ol}<1$ the scalar sigma-delta modulator from Figure~\ref{fig:ScalarSigDel} applied on a sequence of $N$ consecutive source samples with quantization rate $R=I(U^{\sd}_n;U^{\sd}_n+N^{\sd}_n)+\delta(P_{ol})$ attains MSE distortion smaller than
\begin{align}
\frac{D(1+o_N(1))}{1-P_{ol}},\nonumber
\end{align}
given that overload did not occur. In addition, the overload probability is smaller than $P_{ol}$, where $o_N(1)\rightarrow 0$ as $N$ increases, and
\begin{align}
\delta(P_{ol})\triangleq\frac{1}{2}\log\left(-\frac{2}{3}\ln\frac{P_{ol}}{2N} \right).\label{deltaPe}
\end{align}
\end{theorem}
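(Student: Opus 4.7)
The plan is to exploit the subtractive dither to reduce the analysis of the scalar quantization system to that of the AWGN test channel of Figure~\ref{fig:SigDelTest} on the event that no overload occurs, and then to control the overload probability by a union bound combined with a sub-Gaussian tail estimate. Let $A$ denote the event that $|U^{\sd}_n+Z_n|\leq \Gamma/2$ for every sample $n$ in the block of length $N$.

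First I would invoke Schuchman's conditions. Consider the ``idealized'' system obtained by replacing $Q_{R,\sigma^2_{\sd}}(\cdot)$ with an infinite-range uniform quantizer of the same step $\sqrt{12\sigma^2_{\sd}}$. Subtractive dithering implies that the quantization noise $\{\tilde N_n\}$ in this idealized system is i.i.d.\ uniform on $[-\sqrt{3\sigma^2_{\sd}},\sqrt{3\sigma^2_{\sd}})$ and statistically independent of $\{X^{\sd}_n\}$. On the event $A$ the actual and idealized systems coincide sample-by-sample. Since the distortion formula in Proposition~\ref{prop:SigDelRD} depends on the noise only through its second moment, and $\Var(\tilde N_n)=\sigma^2_{\sd}$, the idealized system attains per-sample MSE $D$ in steady state. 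Initializing the finite-length filter $C(Z)$ to zero at the block boundary introduces a transient whose length is bounded by the order of $C(Z)$, contributing a $D\cdot o_N(1)$ correction to the block-averaged MSE. Applying $\mathbb{E}[Y\mid A]=\mathbb{E}[Y\mathbf{1}_A]/P(A)\leq \mathbb{E}[Y]/(1-P_{ol})$ to the nonnegative per-sample squared error then yields the claimed conditional distortion bound.

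Next I would bound $P(A^c)$ by $\sum_{n=1}^N P(|\tilde U^{\sd}_n+Z_n|>\Gamma/2)$ via the union bound, and control each summand through a sharp sub-Gaussian tail estimate. Writing $\tilde U^{\sd}_n+Z_n=X^{\sd}_n-\sum_{k\geq 1}c_k\tilde N_{n-k}+Z_n$ exhibits it as a sum of mutually independent summands (independence is guaranteed by Schuchman's property). The key ingredient is the bound $\mathbb{E}[e^{tU}]=\sinh(at)/(at)\leq \exp(a^2t^2/6)$ for $U$ uniform on $[-a,a]$, which follows from a termwise comparison of Taylor series; equivalently, a zero-mean uniform random variable is sub-Gaussian with variance proxy exactly equal to its variance $a^2/3$ (rather than the loose Hoeffding value $a^2$). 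Consequently $\tilde U^{\sd}_n+Z_n$ is sub-Gaussian with proxy at most $\sigma_U^2+\sigma^2_{\sd}$, where $\sigma_U^2=\mathbb{E}(U^{\sd}_n)^2$ is the quantity in~\eqref{UnSigDelVar}. The standard Chernoff tail bound, combined with the identities $(\Gamma/2)^2=3\cdot 2^{2R}\sigma^2_{\sd}$ and $2^{2I(U^{\sd}_n;U^{\sd}_n+N^{\sd}_n)}=1+\sigma_U^2/\sigma^2_{\sd}$ (from Proposition~\ref{prop:SigDelRD}), then gives $P(|\tilde U^{\sd}_n+Z_n|>\Gamma/2)\leq 2\exp\!\left(-\tfrac{3}{2}\cdot 2^{2\delta}\right)$. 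Summing over $n$ and imposing $P(A^c)\leq P_{ol}$ reproduces~\eqref{deltaPe}.

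The main technical obstacle is getting the constant $3/2$ in the exponent of~\eqref{deltaPe} correct: a Hoeffding-type sub-Gaussian bound on the uniform components would inflate the variance proxy by a factor of three and weaken the exponent accordingly, so the sharp uniform sub-Gaussian bound above is essential. A secondary, minor point is that during the block's initial transient some of the $\tilde N_{n-k}$ are absent due to the zero initialization, but since this only decreases the variance of $\tilde U^{\sd}_n+Z_n$, the tail bound continues to hold uniformly in $n$.
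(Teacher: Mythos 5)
Your proposal is correct and follows essentially the same route as the paper: the same dithered reference system with an infinite-support quantizer made equivalent to the test channel via the i.i.d.-uniform, source-independent quantization noise, the same union bound over the block with tail exponent $-\tfrac{3}{2}2^{2(R-I(U^{\sd}_n;U^{\sd}_n+N^{\sd}_n))}$, the same monotonicity argument for the transient, and the same conditional-expectation step giving the $1/(1-P_{ol})$ factor. The only difference is that where the paper imports the tail bound for a Gaussian plus independent uniform variables from \cite[Lemma 4]{oe15it}, you rederive it directly from the sharp sub-Gaussian estimate $\mathbb{E}\left[e^{tU}\right]\leq e^{a^2t^2/6}$ for $U\sim\Unif\left([-a,a]\right)$, which is precisely the content of that lemma.
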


\begin{proof}
Let $\tilde{Q}_{\sqrt{12\sigma^2}\ZZ}(x)$ be the operation of rounding $x$ to the nearest point in the (infinite) lattice $\sqrt{12\sigma^2}\ZZ$. It is easy to verify that for any $x\in[-\Gamma/2,\Gamma/2)$ we have
\begin{align}
Q_{R,\sigma^2}(x)=\tilde{Q}_{\sqrt{12\sigma^2}\ZZ}\left(x+\dfrac{\sqrt{12\sigma^2}}{2}\right)-\dfrac{\sqrt{12\sigma^2}}{2}\label{quantlatteq}.
\end{align}
Applying~\eqref{QuantNoise1} therefore yields that if overload did not occur in the $n$th sample, i.e., if $|U^{\sd}_n+Z_n|\leq \Gamma/2$, we have
\begin{align}
N_n&=\tilde{Q}_{\sqrt{12\sigma^2_{\sd}}\ZZ}\left(U^{\sd}_n+Z_n+\sqrt{3\sigma^2_{\sd}}\right)\nonumber\\
&-\left(U^{\sd}_n+Z_n+\sqrt{3\sigma^2_{\sd}}\right).\label{errornool}
\end{align}

Dealing with the overload event of the quantizer directly is rather involved. Instead, as done in~\cite{bys14}, we first consider a \emph{reference system} with an infinite-support quantizer ($R=\infty$) and analyze its performance. If the magnitude of the input to the infinite-support quantizer never exceeds $\Gamma/2$ within the processed block, then clearly the reference system is completely equivalent to the original system within this block. Thus, it suffices to find the average distortion of the reference system and the probability that the input to its quantizer exceeds $\Gamma/2$ within a block. In what follows we will therefore assume that the quantization noise is given by~\eqref{errornool} regardless of whether or not $|U^{\sd}_n+Z_n|\leq \Gamma/2$, and account for the overload probability later.

Assuming that the dither sequence $\{Z_n\}$ is drawn statistically independent of the process $\{X^{\sd}_n\}$, the Crypto Lemma, see, e.g.~\cite[Lemma 4.1.1]{ramibook}, implies that $\{N_n\}$ is an i.i.d. sequence of random variables uniformly distributed over the interval $[-\sqrt{12\sigma^2_{\sd}}/2,\sqrt{12\sigma^2_{\sd}}/2)$, statistically independent of $\{X^{\sd}_n\}$. Note that $N_n$ has zero mean and variance $\sigma^2_{\sd}$.
Following this reasoning, the reference sigma-delta data converter depicted in Figure~\ref{fig:ScalarSigDel} (with an infinite-support quantizer) is equivalent to the test channel from Figure~\ref{fig:SigDelTest} with $N^{\sd}_n\sim\Unif\left([-\sqrt{12\sigma^2_{\sd}}/2,\sqrt{12\sigma^2_{\sd}}/2)\right)$ instead of $N^{\sd}_n\sim\mathcal{N}(0,\sigma^2_{\sd})$. Thus, the average MSE distortion attained by the reference scalar sigma-delta modulator from Figure~\ref{fig:ScalarSigDel} is as given in Proposition~\ref{prop:SigDelRD} up to a multiplicative factor of $1+o_N(1)$ that accounts for edge effects. These effects are the by-product of the operation of nulling the filter memory at the beginning of each new block, which incurs temporal non-stationarities. In particular, if the filter $C(Z)$ has $L$ taps, then only after $L$ samples within the block the statistics of the process $\{U^{\sd}_n\}$ will converge to its stationary distribution. However, if the block length is sufficiently large w.r.t. the filter length and the inverse of the MSE distortion, the influence  of these effects vanishes. 

Next, we turn to analyze the probability that an overload error occurs within a block of length $N$, as a function of $R$ and $I(U^{\sd}_n;U^{\sd}_n+N^{\sd}_n)$. Since this event is equivalent to the event that at the reference system some input to the quantizer exceeds $\Gamma/2$ in magnitude within the block, it suffices to upper bound the probability of the latter event.

Assume the reference scalar sigma-delta modulator from Figure~\ref{fig:ScalarSigDel} is applied to a vector $\bx^{\sd}=[X^{\sd}_1,\ldots,X^{\sd}_N]$ of $N$ consecutive samples of the process $\{X_n^{\sd}\}$, where the memory of the filter $C(Z)$ is initialized with zeros. Define the event $\text{OL}_k\triangleq \{|U^{\sd}_k+N^{\sd}_k|>\Gamma/2\}$ and the event $\text{OL}\triangleq\cup_{k}^N \text{OL}_k$. By the union bound, we have
\begin{align}
P_{ol}\triangleq \Pr(\text{OL})\leq\sum_{k=1}^N \Pr\left(\text{OL}_k\right).\label{oltoterror}
\end{align}

The random variable $U^{\sd}_k+N^{\sd}_k=X^{\sd}_k+(\delta_k-c_k)*N_k^{\sd}$ is a linear combination of a Gaussian random variable $X_k^{\sd}$ and statistically independent uniform random variables $\{N_k^{\sd}\}$. In~\cite[Lemma 4]{oe15it} the probability that a random variable of this type exceeds a certain threshold was bounded in terms of its variance. Applying this bound to $U^{\sd}_k+N^{\sd}_k$ yields
\begin{align}
\Pr\bigg(|U^{\sd}_k&+N^{\sd}_k|>\Gamma/2\bigg)\leq 2 \exp\left\{-\dfrac{\Gamma^2}{8\mathbb{E}(U^{\sd}_k+N^{\sd}_k)^2} \right\}\nonumber\\
&=2 \exp\left\{-\dfrac{12\sigma^2_{\sd}2^{2R}}{8\left(\mathbb{E}(U^{\sd}_k)^2+\mathbb{E}(N^{\sd}_k)^2\right)} \right\},\nonumber
\end{align}
where in the last equality we have used the definition of $\Gamma$ and the fact that $U_k^{\sd}$ and $N_k^{\sd}$ are statistically independent.
Equivalently, we may write
\begin{align}
\Pr\bigg(\text{OL}_k \bigg)&\leq 2 \exp\left\{-\dfrac{12\sigma^2_{\sd}2^{2R}}{8\sigma^2_{\sd}\left(1+\frac{\mathbb{E}(U^{\sd}_k)^2}{\sigma^2_{\sd}}\right)} \right\}\nonumber\\
&=2 \exp\left\{-\dfrac{3}{2}2^{2\left(R-\dfrac{1}{2}\log\left(1+\frac{\mathbb{E}(U^{\sd}_k)^2}{\sigma^2_{\sd}}\right) \right)} \right\}\nonumber\\
&=2 \exp\left\{-\dfrac{3}{2}2^{2\left(R-I\left(U^{\sd}_k;U^{\sd}_k+N^{\sd}_k \right)\right)} \right\},\label{olerrork}
\end{align}
where we have used~\eqref{Iineqsigdel} in the last equality. Substituting~\eqref{olerrork} into~\eqref{oltoterror} gives
\begin{align}
P_{ol}\leq 2\sum_{k=1}^N \exp\left\{-\dfrac{3}{2}2^{2\left(R-I\left(U^{\sd}_k;U^{\sd}_k+N^{\sd}_k \right)\right)} \right\}.\label{oltoterrorMI}
\end{align}
Note that $\mathbb{E}(U^{\sd}_k)^2=\sigma^2_{X}+\sigma^2_{\sd}\sum_{m=1}^k c_k^2$ is monotonically nondecreasing in $k$ and is given by~\eqref{UnSigDelVar} for values of $k$ that are greater than the length of the filter $c_k$. We can therefore further bound~\eqref{oltoterrorMI} as
\begin{align}
P_{ol}\leq 2N \exp\left\{-\dfrac{3}{2}2^{2\left(R-I\left(U^{\sd}_n;U^{\sd}_n+N^{\sd}_n \right)\right)} \right\},\label{oltoterrorMI1}
\end{align}
where $I\left(U^{\sd}_n;U^{\sd}_n+N^{\sd}_n \right)$ is as given in Proposition~\ref{prop:SigDelRD}.
To summarize, we have shown that the reference system achieves the same MSE distortion $D$ as characterized by Proposition~\ref{prop:SigDelRD} up to a $1+o_N(1)$ multiplicative term, and that the probability that one of the quantizer input samples exceeds $\Gamma/2$ in magnitude within a block of length $N$, is bounded by~\eqref{oltoterrorMI1}. For our original system whose quantizer has finite support of $[-\Gamma/2,\Gamma/2)$, this means that the overload probability is also upper bounded by the RHS of~\eqref{oltoterrorMI1}. Moreover, the average distortion it achieves if overload did not occur is the same as that of the reference system conditioned on the event that $\text{OL}$ did not occur. Denote this conditioned expected distortion by $D_{\overline{\text{OL}}}$ and the expected distortion conditioned on the event that $\text{OL}$ did occur by $D_{\text{OL}}$. For the reference system, we have
\begin{align}
D(1+o_N(1))=\Pr(\overline{\text{OL}})D_{\overline{\text{OL}}}+\Pr(\text{OL})D_{\text{OL}}\geq \Pr(\overline{\text{OL}})D_{\overline{\text{OL}}},\nonumber
\end{align}
and therefore
\begin{align}
D_{\overline{\text{OL}}}\leq \frac{D(1+o_N(1))}{1-P_{ol}}.\nonumber
\end{align}

This shows that the scalar sigma-delta system from Figure~\ref{fig:ScalarSigDel}, whose quantizer has limited support $[-\Gamma/2,\Gamma/2)$, with $R=I(U^{\sd}_n;U^{\sd}_n+N^{\sd}_n)+\delta(P_{ol})$ achieves the same average MSE distortion as the test channel from Figure~\ref{fig:SigDelTest} up to a multiplicative factor of $(1+o_N(1))/(1-P_{ol})$, with block error probability smaller than $2N\exp\left\{-\dfrac{3}{2}2^{2\delta}\right\}$. Thus, Proposition~\ref{prop:SigDelRD} characterizes the rate-distortion tradeoff achieved by the scalar sigma-delta system up to the aforementioned factor and a constant rate penalty $\delta(P_{ol})$, that depends on the target overload error probability. To be more precise, for any $0<P_{ol}<1$, taking the rate penalty as in~\eqref{deltaPe} guarantees that the overload error probability is smaller than $P_{ol}$.
\end{proof}

\section*{Acknowledgements}
We thank Jan {\O}stergaard and Ram Zamir for their valuable comments on an earlier version of this manuscript.

\bibliographystyle{IEEEtran}
\bibliography{SigDel_bib}

\begin{thebibliography}{10}
\providecommand{\url}[1]{#1}
\csname url@samestyle\endcsname
\providecommand{\newblock}{\relax}
\providecommand{\bibinfo}[2]{#2}
\providecommand{\BIBentrySTDinterwordspacing}{\spaceskip=0pt\relax}
\providecommand{\BIBentryALTinterwordstretchfactor}{4}
\providecommand{\BIBentryALTinterwordspacing}{\spaceskip=\fontdimen2\font plus
\BIBentryALTinterwordstretchfactor\fontdimen3\font minus
  \fontdimen4\font\relax}
\providecommand{\BIBforeignlanguage}[2]{{%
\expandafter\ifx\csname l@#1\endcsname\relax
\typeout{** WARNING: IEEEtran.bst: No hyphenation pattern has been}%
\typeout{** loaded for the language `#1'. Using the pattern for}%
\typeout{** the default language instead.}%
\else
\language=\csname l@#1\endcsname
\fi
#2}}
\providecommand{\BIBdecl}{\relax}
\BIBdecl

\bibitem{jayantnoll}
N.~S. Jayant and P.~Noll, \emph{Digital coding of waveforms: principles and
  applications to speech and video}.\hskip 1em plus 0.5em minus 0.4em\relax
  Englewood Cliffs, NJ: Prentice-Hall, 1984.

\bibitem{McDonald66}
R.~A. McDonald, ``Signal-to-noise and idle channel performance of differential
  pulse code modulation systems — particular applications to voice signals,''
  \emph{Bell System Technical Journal}, vol.~45, no.~7, pp. 1123--1151, 1966.

\bibitem{gn98}
R.~Gray and D.~Neuhoff, ``Quantization,'' \emph{IEEE Transactions on
  Information Theory}, vol.~44, no.~6, pp. 2325--2383, Oct 1998.

\bibitem{cutler1952}
C.~Cutler, ``Differential quantization of communication signals,'' 1952, {US}
  Patent 2,605,361.

\bibitem{cutler1954}
------, ``Transmission systems employing quantization,'' 1960, {US} Patent
  2,927,962 (filed 1954).

\bibitem{th78}
S.~Tewksbury and R.~Hallock, ``Oversampled, linear predictive and noise-shaping
  coders of order $n>1$,'' \emph{IEEE Transactions on Circuits and Systems},
  vol.~25, no.~7, pp. 436--447, Jul 1978.

\bibitem{derpichPhD}
M.~Derpich, ``Optimal source coding with signal transfer function
  constraints,'' Ph.D. dissertation, University of Newcastle, 2009.

\bibitem{dsqg08}
M.~Derpich, E.~Silva, D.~Quevedo, and G.~Goodwin, ``On optimal perfect
  reconstruction feedback quantizers,'' \emph{IEEE Transactions on Signal
  Processing}, vol.~56, no.~8, pp. 3871--3890, Aug 2008.

\bibitem{zke08}
R.~Zamir, Y.~Kochman, and U.~Erez, ``Achieving the {G}aussian rate-distortion
  function by prediction,'' \emph{IEEE Transactions on Information Theory},
  vol.~54, no.~7, pp. 3354--3364, July 2008.

\bibitem{gv05}
T.~Guess and M.~K. Varanasi, ``An information-theoretic framework for deriving
  canonical decision-feedback receivers in {G}aussian channels,'' \emph{IEEE
  Transactions on Information Theory}, vol. IT-51, pp. 173--187, Jan 2005.

\bibitem{oz09}
J.~{\O}stergaard and R.~Zamir, ``Multiple-description coding by dithered
  delta-sigma quantization,'' \emph{IEEE Transactions on Information Theory},
  vol.~55, no.~10, pp. 4661--4675, Oct 2009.

\bibitem{noll78}
P.~Noll, ``On predictive quantizing schemes,'' \emph{The Bell System Technical
  Journal}, vol.~57, no.~5, pp. 1499--1532, May 1978.

\bibitem{gc87}
M.~A. Gerzon and P.~G. Craven, ``Optimal noise shaping and dither of digital
  signals,'' in \emph{Audio Engineering Society Convention 87}, 1989.

\bibitem{sc62}
H.~{Spang III} and P.~Schultheiss, ``Reduction of quantizing noise by use of
  feedback,'' \emph{IRE Transactions on Communications Systems}, vol.~10,
  no.~4, pp. 373--380, Dec 1962.

\bibitem{do12}
M.~Derpich and J.~{\O}stergaard, ``Improved upper bounds to the causal
  quadratic rate-distortion function for {G}aussian stationary sources,''
  \emph{IEEE Transactions on Information Theory}, vol.~58, no.~5, pp.
  3131--3152, May 2012.

\bibitem{berger71}
T.~Berger, \emph{Rate distortion theory: A mathematical basis for data
  compression}.\hskip 1em plus 0.5em minus 0.4em\relax Englewood Cliffs, NJ:
  Prentice-Hall, 1971.

\bibitem{ramibook}
R.~Zamir, \emph{Lattice Coding for Signals and Networks}.\hskip 1em plus 0.5em
  minus 0.4em\relax Cambridge: Cambridge University Press, 2014.

\bibitem{bys14}
\BIBentryALTinterwordspacing
A.~Ben{-}Yishai and O.~Shayevitz, ``The {G}aussian channel with noisy feedback:
  Near-capacity performance via simple interaction,'' 2014. [Online].
  Available: \url{http://arxiv.org/abs/1407.8022}
\BIBentrySTDinterwordspacing

\bibitem{oe15it}
O.~Ordentlich and U.~Erez, ``Precoded integer-forcing universally achieves the
  {MIMO} capacity to within a constant gap,'' \emph{IEEE Transactions on
  Information Theory}, vol.~61, no.~1, pp. 323--340, Jan 2015.

\end{thebibliography}

\end{document}